\documentclass[11pt]{article}

\usepackage{xcolor}
\usepackage{amsmath,amssymb,amsthm}
\usepackage[roman,full]{complexity}
\usepackage{tabularx,booktabs}
\usepackage{comment}
\usepackage{fullpage}
\usepackage{pifont}
\usepackage{enumitem}
\usepackage[utf8]{inputenc}

\newcommand{\YearsSinceBrelsford}{\number\numexpr\year-2007}
\newcommand{\keywords}[1]{\noindent\textbf{Keywords:} #1}
\newcommand{\card}[1]{\lVert #1 \rVert}

\newcommand{\cale}{\mathcal{E}}

\newcommand{\calp}{\mathcal{P}}

\newcommand{\cals}{\mathcal{S}}
\newcommand{\calt}{\mathcal{T}}
\newcommand{\calu}{\mathcal{U}}

\newcommand{\dash}{\hbox{-}\allowbreak}

\newcommand{\LANGDEF}[3]{
\begin{center}
{\small 
\begin{tabularx}{0.98\columnwidth}{ll}
\toprule
\multicolumn{2}{c}{\textsc{#1}} \\
\midrule
{\bf Given:}   & \parbox[t]{0.75\columnwidth}{#2\vspace*{1mm}}  \\
{\bf Question:}& \parbox[t]{0.75\columnwidth}{#3\vspace*{.5mm}} \\ 
\bottomrule
\end{tabularx}
}
\end{center}
}

\newcommand{\OPTDEF}[4]{
\begin{center}
{\small 
\begin{tabularx}{0.98\columnwidth}{ll}
\toprule
\multicolumn{2}{c}{\textsc{#1}} \\
\midrule
{\bf Given:}   & \parbox[t]{0.75\columnwidth}{#2\vspace*{1mm}}  \\
{\bf Solution:}& \parbox[t]{0.75\columnwidth}{#3\vspace*{.5mm}} \\ 
{\bf Measure:}& \parbox[t]{0.75\columnwidth}{#4\vspace*{.5mm}} \\ 
\bottomrule
\end{tabularx}
}
\end{center}
}

\newtheorem{theorem}{Theorem}[section]

\newtheorem{lemma}[theorem]{Lemma}
\newtheorem{proposition}[theorem]{Proposition}

\newtheorem{construction}[theorem]{Construction}

\newtheorem{definition}[theorem]{Definition}

\newcommand{\posint}{\ensuremath{\mathbb{Z}_+}}

\newcommand{\nonnegreals}{\ensuremath{\mathbb{R}_{\geq 0}}}
\newcommand{\nonnegints}{\ensuremath{\mathbb{Z}_{\geq 0}}}

\newlang{\OPT}{OPT}

\newclass{\POLYAPX}{Poly\dash\APX}
\newclass{\LOGAPX}{Log\dash\APX}
\newclass{\LOGAPXCOM}{Log\dash\APX\dash complete}

\newcommand{\strictred}{\ensuremath{\leq_{\rm strict}}}

\newcommand{\MkU}{\ensuremath{{\rm M} k {\rm U}}}

\newlang{\opt}{opt}
\newlang{\optd}{\opt\dash}

\newfunc{\sol}{sol}
\newfunc{\type}{type}

\newcommand{\myscore}{\ensuremath{{\rm score}}}

\newcommand{\condorcet}{{\rm Condorcet}}

\newcommand{\weighted}{{\rm weighted}}

\newlang{\CC}{CC}
\newlang{\DC}{DC}

\newlang{\AV}{AV}
\newlang{\DV}{DV}   

\newlang{\RPC}{RPC}
\newlang{\PC}{PC}
\newlang{\PV}{PV}

\newlang{\TE}{TE}
\newlang{\TP}{TP}

\newlang{\rpcte}{RPC\dash TE}
\newlang{\rpctp}{RPC\dash TP}
\newlang{\pcte}{PC\dash TE}
\newlang{\pctp}{PC\dash TP}
\newlang{\pvte}{PV\dash TE}
\newlang{\pvtp}{PV\dash TP}

\newlang{\UW}{UW}
\newlang{\NUW}{NUW}
\newlang{\pluralityccac}{plurality\dash CCAC}
\newlang{\pluralitydcac}{plurality\dash DCAC}
\newlang{\pluralityccdc}{plurality\dash CCDC}
\newlang{\pluralitydcdc}{plurality\dash DCDC}
\newlang{\pluralityccrpcte}{plurality\dash CCRPC \dash TE}
\newlang{\condorcetccav}{Condorcet\dash CCAV}
\newlang{\condorcetccdv}{Condorcet\dash CCDV}
\newlang{\approvalccav}{approval\dash CCAV}
\newlang{\approvalccdv}{approval\dash CCDV}
\newlang{\majorityccdc}{Majority\dash CCDC}
\newlang{\pluralityccuac}{plurality\dash CCUAC}

\newcommand{\score}[1]{\text{score}(#1)}

\newlang{\ptas}{PTAS}
\newlang{\fptas}{FPTAS}
\newlang{\npo}{NPO}
\newlang{\pluralitydcudc}{plurality\dash DCUDC}
\newlang{\condorcetccuav}{Condorcet \dash CCUAV}
\newlang{\condorcetccudv}{Condorcet \dash CCUDV}
\newlang{\pluralityccdcexclude}{plurality\dash CCDC\dash NUW \dash Exclude}
\newlang{\optpluralityccdcexclude}{opt\dash plurality\dash CCDC\dash Exclude}
\newlang{\optpluralityccrpcte} {\optd \pluralityccrpcte}
\newlang{\optpluralitydcdc}{\optd \pluralitydcdc}
\newlang{\optapprovalccav}{opt\dash approval\dash CCAV}
\newlang{\optapprovalccdv}{opt\dash approval\dash CCDV}
\newlang{\optcondorcetccav}{opt\dash Condorcet\dash CCAV}
\newlang{\optcondorcetccdv}{opt\dash Condorcet\dash CCDV}

\newlang{\optpluralitydcdcweighted}{\optd \pluralitydcdc\dash weighted}
\newlang{\optapprovalccavweighted}{opt\dash approval\dash CCAV\dash weighted}
\newlang{\optapprovalccdvweighted}{opt\dash approval\dash CCDV\dash weighted}
\newlang{\optcondorcetccavweighted}{opt\dash Condorcet\dash CCAV \dash weighted}
\newlang{\optcondorcetccdvweighted}{opt\dash Condorcet\dash CCDV \dash weighted}

\title{Approximating Electoral Control Problems}
\author{Huy~Vu Bui
\and Michael C. Chavrimootoo
\and Kien T. Le
\and Son M. Nguyen\\
Department of Computer Science\\
Denison University\\
Granville, OH 43023}
\date{May 13, 2026}

\begin{document}

\maketitle

\begin{abstract}
    Much research in electoral control---one of the most studied form of electoral attacks, in which an entity running an election alters the structure of that election to yield a preferred outcome---has focused on giving decision complexity results, e.g., membership in $\P$, $\NP$-completeness, or fixed-parameter tractability.
    Approximability
    on the other hand 
    has
    received little attention in electoral control, despite 
    its
    prevalence in the study of other forms of electoral attacks, such as manipulation and bribery. 
    Early work established 
    preliminary results 
    about
    popular voting rules such as plurality, approval, and Condorcet. 
    In this paper, we 
    completely determine
    for each of the ``standard'' control problems under plurality, approval, and Condorcet, whether they are approximable, and we prove our results in both the weighted and unweighted voter settings. 
    \medskip
    \keywords{Computational Social Choice, Approximation Algorithms, Electoral Control, Minimum $k$-Union}
\end{abstract}

\section{Introduction}

In modern societies, voting or electing is commonly used as a method to help a group reach a decision that reflects their collective preferences. 
However, during the electoral process, individuals or groups may seek to influence the outcome in their favor, for example, to make
a particular candidate win or lose the election. In particular, the setting wherein the electoral chair---the entity conducting the election---alters the structure of the election to yield a preferred outcome is known as \emph{electoral control}, which is one of the major forms of electoral manipulative attacks
(along with manipulation and bribery) within computational social choice (COMSOC); see~\cite{%
alo-ina-jai-tal-mor:c:control-liquid-democracy,
kac-rot:c:weighted-voting-games,
mau-nic-nus-rot-see:c:completing-picture-schulze-rp,
yan:c:impact-your-paper,
alo-jan-lis-pap:c:perspective-liquid-democracy,
che-kac-nus-rot-sch-see:c:control-in-comsoc,
de-dey-san:c:control-in-polls,
fal-jan-kno-pok-sch-slu-sor:c:project-strength-control,
kac-rot:c:control-by-deleting-players-from-weighted-voting-games-is-np-pp-complete-for-the-penrose-banzhaf-power-index,
bui-cha-le-ngu:c-aamas:approximation,
che-gut-mus-sim:c:control-hedonic-games,
sch-sor:c:control-participatory-budgeting%
} for select work on control since 2024. 
The standard control actions are adding/deleting/partitioning candidates/voters.

Much of the work on control has produced decision-complexity results (see~\cite{rot:b:econ-second-edition}), because---from a classical perspective---if a control problem is NP-hard, then it is ``hard to solve'' and so one can view this as ``protection against'' (or ``resistance to'') the attack. A wide body of work also emerged giving fixed-parameter-tractability results (see~\cite{bet-bre-che-nie:c:parameterized-elections-survey}). In practice, however, a campaign manager would prefer to compute actual solutions to problems, and that line of work, i.e., the line of search complexity, has been explored too~\cite{hem-hem-men:j:search-versus-decision,car-cha-hem-nar-tal-wel:c-eumas:search-vs-search}, sometimes showing that there is a tension between decision and search complexities. 

In this paper, we turn our focus to the optimization versions of the ``standard'' electoral control problems, where the chair seeks to minimize some property of the solution to the control problem, e.g., minimizing the number of candidates to delete. To our surprise, this line of work is well-studied with respect to manipulation and bribery, but not control;
see \cite{bre-fal-hem-sch-sch:c:approximability-of-manipulation,fal-hem-hem-rot:j:llull,fal-hem-hem:j:bribery,xia:margin-of-victory,yan:c:election-few-candidates,kel-has-haz:j:approximating,kel-has-haz:j:approximate-weighted-priced-bribery}
to list a few examples.
In particular, Brelsford~\cite{bre:msthesis:approximations} initiated that study by providing preliminary results and by outlining difficulties one would likely face in designing approximation algorithms (or proving the lack thereof) for the standard control problems. Eventually, Faliszewski, Hemaspaandra, and Hemaspaandra~\cite{fal-hem-hem:j:weighted-control} and  Bredereck et al.~\cite{bre-fal-nie-sko-tal:j:mixinteger-lp} contributed to that list of result. Thus the amount of work that has been done over the last \YearsSinceBrelsford\ years has been very limited.

Our work complements the view that $\NP$-hardness does not necessarily equate to ``protection'' as many $\NP$-hard problems admit efficient approximations (for example, the Vertex Cover problem admits a 2-approximation \cite{cor-lei-riv-ste:b:algorithms-fourth-edition}) and in practice, a suboptimal solution may be perfectly acceptable. 
So there's a very real need (if one wishes to treat ``resistance'' as a form of protection) to argue that approximation algorithms do not undermine the resistance captured by the NP-hardness of the various electoral control problems. 

This work is not antithetical to the aforementioned work on fixed-parameter tractability, but is rather a different way of analyzing \NP-hard control problems, and can also offer additional applications as some FPT algorithms lose their appeal even for relatively small parameter values. For example, an algorithm that runs in $O(2^kn^3)$, where $n$ is the size of the input and $k$ is the parameter may not be practical when $k=100$. 
Moreover, in this paper, we give approximation algorithms for problems that do not admit FPT algorithms~\cite{bet-uhl:j:fpt-voting-graph,fen-lua-liu-zhu:j:fpt-voting}, which is interesting.

\paragraph{Contributions} 
We give new approximation algorithms, hardness/completeness results, and a refined classification of resistance under plurality,  approval, and Condorcet through the lens of approximation.
We summarize our results in Table~\ref{table:results}. 
We show that partition-based control problems are inapproximable 
regardless of the ``measure'' being optimized, and our proof uses a simple argument that successfully bypasses difficulties anticipated by~\cite{bre:msthesis:approximations}.
Finally we prove the inapproximability of the remaining $\NP$-hard control problems.
In addition to these classifications, we provide matching results for each of the weighted variants of the standard control problems. 

\newcommand{\notapprox}{Inapprox.}

\label{table:results}
\begin{table*}[ht]
\centering
\scriptsize

\begin{tabular}{|l||cc||cc||cc|}
\hline
{} & \multicolumn{2}{c||}{Plurality} & \multicolumn{2}{c||}{Condorcet} & \multicolumn{2}{c|}{Approval} \\
\cline{2-7}
Control by & CC & DC & CC & DC & CC & DC \\
\hline
\hline
Adding Candidates & \notapprox   & \notapprox   & - & - & - & - \\ 
\hline
Deleting Candidates & \textbf{Poly-APX}/{\boldmath$\Omega(m^{1/4})$} & \textbf{\notapprox} & - & - & - & - \\ 
\hline
\begin{tabular}[c]{@{}l@{}} Partition of\\Candidates\end{tabular} & 
  \begin{tabular}[c]{@{}c@{}}TE: \textbf{\notapprox} \\ TP: \textbf{\notapprox}\end{tabular} & 
  \begin{tabular}[c]{@{}c@{}}TE: \textbf{\notapprox} \\ TP: \textbf{\notapprox}\end{tabular} & 
  - & - &
  \begin{tabular}[c]{@{}c@{}}TE: - \\ TP: -\end{tabular} & 
  \begin{tabular}[c]{@{}c@{}}TE: - \\ TP: -\end{tabular}
  \\
\hline
  \begin{tabular}[c]{@{}l@{}} Run-off Partition\\ of Candidates\end{tabular} & 
  \begin{tabular}[c]{@{}c@{}}TE: \textbf{\notapprox} \\ TP: \textbf{\notapprox}\end{tabular} & 
  \begin{tabular}[c]{@{}c@{}}TE: \textbf{\notapprox} \\ TP: \textbf{\notapprox}\end{tabular} & 
  - & - &
  \begin{tabular}[c]{@{}c@{}}TE: - \\ TP: -\end{tabular} & 
  \begin{tabular}[c]{@{}c@{}}TE: - \\ TP: -\end{tabular} 
  \\
\hline
Adding Voters & - & -  & \textbf{\notapprox} & - & \textbf{Log-APX-Complete} & -\\ 
\hline
Deleting Voters & - & - & \textbf{\notapprox} & - & \textbf{Log-APX-Complete} & - \\ 
\hline
Partition of Voters & 
  \begin{tabular}[c]{@{}c@{}}TE: - \\ TP: \textbf{\notapprox}\end{tabular} & 
  \begin{tabular}[c]{@{}c@{}}TE: -\\ TP: \textbf{\notapprox}\end{tabular} &  
  \textbf{\notapprox} & - &
  \begin{tabular}[c]{@{}c@{}}TE: \textbf{\notapprox} \\ TP: \textbf{\notapprox}\end{tabular} & 
  \begin{tabular}[c]{@{}c@{}}TE: - \\ TP: -\end{tabular}
  \\
\hline
\end{tabular}

\caption{Approximation complexities of the $\NP$-hard standard control problems under plurality, Condorcet, and approval. Our (new) results are boldfaced. Nonboldfaced results are from 
\cite{bre:msthesis:approximations}. A ``-'' indicates that the decision problem is in $\P$.}
\end{table*}

\section{Preliminaries}

In this section, we will  first cover preliminaries on election systems,  control types, and related definitions. These follow the same foundational definitions established by Bartholdi, Tovey, and Trick~\cite{bar-tov-tri:j:control}, which were later expanded in the work of Hemaspaandra, Hemaspaandra, and Rothe~\cite{hem-hem-rot:j:destructive-control} and many other papers in the field (see for example, \cite{bra-con-end-lan-pro:b:handbook-of-comsoc,rot:b:econ-second-edition}). We then introduce the necessary terminology and concepts related to approximation algorithms.

We assume standard familiarity with the complexity classes $\P$ and $\NP$, as well as with the notions of $\NP$-completeness (with respect to polynomial-time many-one reductions,\footnote{We say that $A$ polynomial-time many-one reduces to $B$ ($A \leq_m^p B$) if there is a polynomial-time computable function $f$ such that for each $x$, $x \in A \iff f(x) \in B$.} as is standard) and fixed-parameter tractability.

For each $\mathbb{F} \in \{\mathbb{Z}, \mathbb{R}\}$, we let 
$\mathbb{F}_+ = \{x \in \mathbb{F} \mid x > 0\}$ and let $\mathbb{F}_{\geq 0} = \{x \in \mathbb{F} \mid x \geq 0\}$. For each $n \in \posint$ we adopt the shorthand $[n] = \{1, \ldots, n\}$. 
We also adopt the 
(usual) 
convention that $\max\emptyset = -\infty$ and $\min\emptyset = \infty$.\footnote{We abuse notation in this paper by using $\infty$ (resp. $-\infty$) in an algorithmic context to refer to a special symbol that is strictly larger (resp.\ smaller) than every number used by the algorithm(s).}

\subsection{Elections and Election Systems}

An \emph{election} is a pair of two finite sets: a  candidate set $C$ and a collection of votes~$V$, where each vote is implicitly associated to a voter.
In this paper, each vote is either a linear ordering 
over the set of candidates, e.g., if $C = \{a, b, c\}$ then $b > c > a$ is a vote, or an approval ballot, i.e., a $\card{C}$-bit string where each bit corresponds to a candidate, and a bit is set to 1 exactly if the corresponding candidate is approved (otherwise, that bit is set to 0) in that vote.
Whenever we use ``$\cdots$'' in a linear ordering, this represents the remaining candidates in some predefined way, e.g., ordered by name.

A voting rule (aka., election system)
$\cale$ is defined as a function that maps each election $(C, V)$ to a subset of $C$, representing the set of winning candidates. In this paper, will study more closely the three voting rules defined below.

\begin{itemize}%
\item \textbf{Plurality:} In a plurality election, each vote is a linear order. Under plurality, each candidate receives one point for each vote that ranks them first, and a plurality winner is a candidate with maximal points.
If multiple such candidates exist, they are all winners.

\item \textbf{Approval:} In an approval election, each vote is an approval ballot.
Under approval, each candidate receives one point for each vote that approves them, and an approval winner is a candidate with maximal points. If multiple such candidates exist, they are all winners.

\item \textbf{Condorcet:} In a Condorcet election, each vote is a linear order.
A Condorcet winner of an election $(C, V)$ is a candidate $c$ who defeats each candidate $d \neq c$
in a head-to-head contest, i.e., more than half of the votes rank $c$ above $d$.
If a Condorcet winner exists, then that candidate is a unique winner. Otherwise, no one wins.
\end{itemize}

\subsection{Electoral Control Problems}

In their seminal work, Bartholdi, Tovey, and Trick~\cite{bar-tov-tri:j:control} defined the notion of ``constructive control'' in the ``unique-winner model,'' and their work was substantially expanded upon in subsequent work.

In this section, we define the
``standard
control problems''---which were originally defined by \cite{bar-tov-tri:j:control} and \cite{hem-hem-rot:j:destructive-control}---%
along with certain natural variants of those problems that have been raised in the literature and for which we are able to provide results. Each of these problems has been originally defined in the literature as a decision problem, but our work is about approximation algorithms, so we need to provide definitions for the optimization versions of these problems. 

Each partition-based problem in Definition~\ref{d:control-types} consists of subelections whose winners proceed to a final round. To remain consistent with the work of \cite{bre:msthesis:approximations}, we use as tie-breaking rules for subelections the ties-eliminate (TE) rule---in which only a unique winner of a subelection proceeds to the next round---and the ties-promote (TP) rule---in which all winners of a subelection proceed to the next round.  For each such problem, every election (in each stage) is conducted using the same voting rule. Thus each partition-based-problem definition implicitly defines two problems, and so Definition~\ref{d:control-types} specifies 10 decision control problems for each voting rule~$\cale$. 

Related to partitioning, we define a partition of a candidate set $C$ as a pair of disjoint subsets $C_1$ and $C_2$ such that $C_1 \cup C_2 = C$.
Similarly, we define a partition of a collection of votes $V$ as a pair $V_1$ and $V_2$ that when merged (denoted by $V_1 \cup V_2$, as is consistent with the prevalent notation in the literature) equals $V$.

\begin{definition}[\cite{car-cha-hem-nar-tal-wel:j:sct}]\label{d:control-types}  
Let $\cale$ be a voting rule.
\begin{enumerate}
    \item\label{i:ccac} In the \textbf{constructive control by adding candidates} problem for $\cale$ (denoted by $\cale\dash\CC\AC$), we are given two disjoint sets of candidates $C$ and $A$, a collection of votes $V$ over $C \cup A$, a candidate $p \in C$, and a nonnegative integer $k$. We ask if there is a set $A' \subseteq A$ such that (i)~$\|A'\| \leq k$ and (ii)~$p$ uniquely wins $\cale$ election $(C\cup A', V)$.

    \item\label{i:ccdc} In the \textbf{constructive control by deleting candidates} problem for $\cale$ (denoted by $\cale\dash\CC\DC$), we are given an election $(C, V)$, a candidate $p \in C$, and a nonnegative integer $k$. We ask if there is a set $C' \subseteq C-\{p\}$ such that (i)~$\|C'\| \leq k$ and (ii)~$p$ uniquely wins $\cale$ election $(C - C', V)$.
    
    \item\label{i:ccav} In the \textbf{constructive control by adding voters} problem for $\cale$ (denoted by $\cale\dash\CC\AV$), we are given a set of candidates $C$, two collections of votes, $V$ and $W$ (respectively representing the ``registered'' and ``unregistered'' voters), over $C$, a candidate $p \in C$, and a nonnegative integer $k$. We ask if there is a collection $W' \subseteq W$ such that (i)~$\|W'\| \leq k$ and (ii)~$p$ uniquely wins $\cale$ election $(C, V \cup W')$.
    
    \item\label{i:ccdv} In the \textbf{constructive control by deleting voters} problem for $\cale$ (denoted by $\cale\dash\CC\DV$), we are given an election $(C, V)$, a candidate $p \in C$, and a nonnegative integer $k$. We ask if there is a collection $V' \subseteq V$ such that (i)~$\|V'\| \leq k$ and (ii)~$p$ uniquely wins $\cale$ election $(C, V-V')$.

        \item In the \textbf{constructive control by partition of voters} problem for $\cale$, in the TP or TE tie-handling rule model,
    (denoted by $\cale\dash\CC\PV\dash\TP$ or $\cale\dash\CC\PV\dash\TE$, respectively),  
    we are given an election $(C, V)$, and a candidate $p \in C$. We ask if there is a partition $(V_1, V_2)$ of $V$
    such that $p$ uniquely wins $\cale$ election $(W_1 \cup W_2, V)$,
    where $W_i$ denotes the winners of $\cale$ subelection $(C, V_i)$ that survive the tie-handling rule.

    \item In the \textbf{constructive control by run-off partition of candidates} problem for $\cale$, in the TP or TE tie-handling rule model,
    (denoted by $\cale\dash\CC\RPC\dash\TP$ or $\cale\dash\CC\RPC\dash\TE$, respectively), 
    we are given an election $(C, V)$, and a candidate $p \in C$. We ask if there is a partition $(C_1, C_2)$ of $C$  such that $p$ uniquely wins
    $\cale$ election $(W_1 \cup W_2, V)$, where $W_i$ denotes the winners of $\cale$ subelection $(C_i, V)$ that survive the tie-handling rule.

    \item In the \textbf{constructive control by partition of candidates} problem for $\cale$, in the TP or TE tie-handling rule model,
    (denoted by  $\cale\dash\CC\PC\dash\TP$ or $\cale\dash\CC\PC\dash\TE$, respectively),  
    we are given an election $(C, V)$, and a candidate $p \in C$. We ask if there is a partition $(C_1, C_2)$ of $C$  such that $p$ uniquely wins
    $\cale$ election $(W_1 \cup C_2, V)$, where $W_1$ denotes the winners of $\cale$ subelection $(C_1, V)$ that survive the tie-handling rule.

    \end{enumerate}
\end{definition}

For each control type defined above, several variants of the control problems can be obtained by modifying certain aspects of the problem definition as follows.

\begin{description}
    \item[Destructive Control:] 
    Definition \ref{d:control-types} gives the constructive setting, in which the goal is to ensure that a designated candidate $p$ is a unique winner of the election. In the \emph{destructive} variant, the objective is to \emph{prevent} $p$ from being a unique winner. To indicate this variant, the prefix ``CC'' is replaced with ``DC'' in the problem name.

    \item[Weighted Votes:] In the weighted electoral control setting, the input also contains a function mapping each vote $v$ to a positive integer $w(v)$. 
    Under plurality each candidate receives $w(v)$ points for each vote $v$ that ranks them first, under approval each candidate receives $w(v)$ points for each vote $v$ that approves them, and under 
    Condorcet each pairwise contest can be interpreted as a weighted plurality election.
    The problems in Definition~\ref{d:control-types} are all unweighted, and in our paper, we will give results for both weighted and unweighted control problems. To refer to the weighted version of a control problem, we add the suffix ``-weighted.''
\end{description}

To our knowledge, the earliest comprehensive study of weighted electoral control was done by Faliszewski, Hemaspaandra, and Hemaspaandra~\cite{fal-hem-hem:j:weighted-control}, even though several earlier theses (e.g., \cite{bre:msthesis:approximations,rus:msthesis:borda,lin:thesis:elections}) had been discussing the notion.

Let $\cale$ be a voting rule and let $\calt$ be a control action (e.g., $\CC\AC$). If $\cale\dash\calt$ is $\NP$-hard, we say that $\cale$ is resistant to $\calt$.

We now turn our attention to the optimization versions of the above control problems. First, we define $\NPO$, a class of optimization problems. Informally, an NPO problem is an optimization problem whose decision version is in $\NP$\@. While we could state our definitions without referring to NPO, we have opted to include this definition to provide the reader with more concrete background on optimization problems.

\begin{definition}[\protect\cite{cre:c:approximation}]\label{d:npo}
    An $\NPO$ problem $A$ is a four-tuple $\allowbreak (I_A, \sol_A, \mu_A, \type_A)$ such that
    \begin{enumerate}
        \item $I_A$ is the set of instances of $A$ and it is recognizable in polynomial time.
        \item Given an instance $x \in I_A$, $\sol_A(x)$ denotes the set of feasible solutions of $x$, and there is a polynomial $p$ such that for each $y \in \sol_A(x)$, $|y| \leq p(|x|)$. Moreover, for each $x$ and $y$ with $|y| \leq p(|x|)$, it is decidable in polynomial time whether $y \in \sol_A(x)$.
        \item Given an instance $x \in I_A$ and an arbitrary $y$, if $y \in \sol_A(x)$, then function $\mu_A(x, y)$ denotes the positive integer measure of $y$ (often also called the value of $y$). 
        If $y \not\in \sol_A(x)$, then $\mu_A(x, y) = \type_A \emptyset$.
        The function $\mu_A$ is computable in polynomial time.
        \item $\type_A \in \{\min, \max\}$.
    \end{enumerate}

     Finally, for each $x \in I_A$, let $\OPT_A(x) = y$, where $|y|$ is polynomial in $|x|$, such that $$\mu_A(x, y) = \type_A\{\mu_A(x, y') \mid y' \in \sol_A(x)\}.$$
\end{definition}

Defining the optimization versions of the nonpartition-based control problems is very natural, and we define them by way of example. All such problems are minimization problems.
We prepend our optimization problems with ``\optd'' to differentiate between the optimization and decision versions of a problem.
The optimization problem for $\cale\dash\CC\AC$, i.e., $\optd\cale\dash\CC\AC$, is defined as follows.

\OPTDEF{Minimum Constructive Control by Adding Candidates under $\cale$ ($\optd\cale\dash\CC\AC$)}
{Two finite sets $C$ and $A$ of candidates, and collection $V$ of votes over $C \cup A$, and a distinguished candidate $p \in C$.}
{A set $A' \subseteq A$ such that $\{p\} = \cale(C\cup A', V)$.}
{$1 + \|A'\|$.\footnote{We add one to ensure that our measure is positive, which will prevent us from running into issues of division by zero when computing ``performance ratios'' (see Definition~\ref{def:performance}). That does not affect how we compute such ratios in this paper as we always are providing asymptotic bounds on performance ratios.}}

The optimization versions of the remaining nonpartition-based control problems are defined analogously. In particular, they are all minimization problems, and
when adding/deleting candidates/votes, a feasible solution $S$ is a collection of candidates/votes to be added/deleted to make $p$ a unique winner, and the measure is $1 + \card{S}$.

Defining optimization problems for the partition-based problems is not straightforward; although it is clear what the inputs and feasible solutions should be, it is not evident what the measures should be. Brelsford~\cite{bre:msthesis:approximations} proposes maximizing the size of the smallest part of the partition---thereby minimizing the size difference between the parts of the partition---%
which is
consistent with recent work~\cite{cha-erd-hem-hem-kot-nie-rei-smi:manuscript:npm} that builds on the work of Erd\'{e}lyi, Hemaspaandra, and Hemaspaandra~\cite{erd-hem-hem:c:more-natural-models-of-partition-control}, which generalizes the notion of partitions to bound the difference between the sizes of the parts. However, we do not fix a particular measure for the partition-based problems in this paper as our inapproximability results for partition-based control problems will hold \emph{for all measures}.

\subsection{Approximations and Hardness}

Approximation complexity classes provide a natural framework for describing and comparing the performance of algorithms that seek near-optimal solutions. In this work, we analyze the approximability of several electoral control problems, and our results are framed in terms of these classes. To make the statements of our theorems precise, we define the key approximation classes that will appear in this paper. 

We present the following definitions, taken from Crescenzi's \cite{cre:c:approximation} survey.

\begin{definition}[\protect\cite{cre:c:approximation}]\label{def:performance}
Let $A = (I_A, \sol_A, \mu_A, \type_A)$ be an optimization problem, let $x \in I_A$, and let $y \in \sol_A(x)$. The \emph{performance ratio} of $y$ with respect to $x$ is defined as
$$R_A(x, y) = \max\left\{ \frac{\mu_{A}(x, y)}{\OPT_A(x)}, \frac{\OPT_A(I)}{\mu_A(x, y)} \right\}.$$
\end{definition}

Throughout the literature, the terms ``performance ratio'' and ``approximation ratio'' are used interchangeably.  

An $f(n)$-approximation algorithm for an $\NPO$ problem $A$ is a polynomial-time computable function $M$ that on each input $x \in I_A$ outputs a solution $y$ such that $R_A(x, y) \leq f(|x|)$, and we at times in an abuse of terminology refer to $f(n)$ as the performance/approximation ratio of the approximation algorithm (with respect to problem $A$)\@.
We assume that if there is no such solution $y$, then the algorithm outputs no solution.
If a problem has/admits no $f(n)$-approximation algorithm, we say it is inapproximable or not approximable.
We now move on to the subclasses of $\NPO$ that are relevant to our work.\footnote{%
We mention in passing that Brelsford~\cite{bre:msthesis:approximations} also considered three subclasses of $\NPO$ known as $\FPTAS$, $\PTAS$, and $\APX$ (which we do not define as we do not prove any results directly related to them), and asked whether any of the control problems they studied could be in those classes.
However, it is known that $$\FPTAS \subseteq \PTAS \subseteq \APX \subseteq \LOGAPX \subseteq \POLYAPX \subseteq \NPO,$$ and that those inclusions are strict if and only $\P \neq \NP$~\cite{cre:c:approximation}. By the nature of our results (which will be clear in Sections~\ref{s:approx} and~\ref{s:inapprox}), it follows that the answer to the above question is ``no,'' unless $\P=\NP$\@.%
}
Let $A \in \NPO$.

\begin{itemize} 
    \item $A \in \LOGAPX$ if $A$ admits an $O(\log{n})$-approximation algorithm.
    \item $A \in \POLYAPX$ if for some $k > 0$, $A$ admits an $O(n^k)$-approximation algorithm.
\end{itemize}

Just as $\NP$-completeness allows us to identify problems that do not have polynomial-time algorithms (unless $\P=\NP$), we can prove the ``optimality'' of our approximation algorithms by giving completeness results for the corresponding (approximation) class. 
However, the notion of reduction used here is not the same as the one used for $\NP$-completeness. Indeed, $\APX$-completeness is defined in terms of ``PTAS reductions'' while $\LOGAPX$-completeness and $\POLYAPX$-completeness are defined in terms of  ``AP reductions.'' It is also common to prove lower bound results using ``L reductions.'' 
All of these reductions are known as ``approximation-preserving reductions.''
Thankfully, our work has the nice property that every approximation-preserving reduction we provide is a ``strict reductions'' (defined below), and since it is known that a strict reduction from $A$ to be $B$ implies the existence of an L/PTAS/AP reduction from $A$ to $B$ (see~\cite{cre:c:approximation}), we do not define those three types of reductions; interested readers can consult~\cite{cre:c:approximation}.

\begin{definition}[Strict Reduction~\protect\cite{cre:c:approximation}]
    Let $A = (I_A, \sol_A, \mu_A, \type_A)$ and $B = (I_B, \sol_B, \mu_B, \type_B)$ be two \NPO\ problems. Then $A$ strict reduces to $B$ ($A \strictred B$) if there are polynomial-time computable function $f$ and $g$ such that for each $x \in I_A$ and each $y \in \sol_B(f(x))$, 
    $$R_A(x, g(x, y)) \leq R_B(f(x), y).$$
\end{definition}

Since strict reductions are transitive, to prove that a problem $A$ is Log-APX-complete in this paper, we first prove that it is in Log-APX, and then give a strict reduction from a known Log-APX-complete problem to~$A$\@. We do not prove any Poly-APX-completeness results in this paper.

\section{Approximations}\label{s:approx}
For approval, we prove that the optimization problems of constructive control by adding/deleting voters are \LOGAPXCOM. Membership in \LOGAPX\ is obtained through a reduction to Covering Integer Programming (CIP). For constructive control by deleting candidates under plurality voting, we give a simple $O(m)$ approximation and establish an $\Omega(m^{1/4})$ conditional lower bound via a strict reduction from Minimum~$k$-Union.  To our knowledge, this is the first application of $\MkU$ in COMSOC.

\subsection{Approval and Voter Control}

To prove our $\LOGAPX$-membership results, we appeal to the notion of Covering Integer Programming (CIP), which generalizes classic covering problems such as Set Cover and Weighted Multiset Cover by allowing for additional constraints and integer variables.
A CIP is itself a special case of an Integer Linear Program (ILP).

\begin{definition}[\protect\cite{kol-you:j:approx-cip}]
Given $A \in \nonnegreals^{m \times n}$, $a \in \nonnegreals^m$, and $c,d \in \nonnegreals^n$, a Covering Integer Program (CIP) $\mathcal{P} = (A, a, c, d)$ seeks to minimize $c^T x$ subject to $x \in \nonnegints^n$, $x \leq d$, and $Ax \geq a$.
\end{definition}

In the above program, the inequalities are so-called component-wise inequalities and ``$A x \geq a$'' is called the covering constraint.
To prove that $\approvalccav\dash\weighted$ and $\approvalccdv\dash\weighted$ are in $\LOGAPX$-complete, we leverage a result from Kolliopoulos and Young~\cite{kol-you:j:approx-cip} showing that CIPs can be approximated within a factor of $O(\ln m)$.%
\footnote{The fact that both problems have the same complexity feels natural---and perhaps even obvious---as the two problems seem to be complementary: added votes that seem to correspond to votes that should not be deleted, and deleted votes that seem to correspond to votes that that should not be added. However that relationship does not provide any insights from a complexity perspective as there is no mechanism to prevent the deletion of registered votes in the above formulation. More concretely, this is exemplified by the fact that under 3-approval, ${\rm CCAV}$ is in $\P$ while ${\rm CCDV}$ is $\NP$-complete~\cite{lin:thesis:elections}.}

We first focus on constructive control by adding voters.
For the sake of clarity, we state the \LOGAPX-membership result as a separate proposition and subsequently invoke it to prove our \LOGAPX-completeness result.

\begin{proposition}\label{prop:opt-approval-ccav-w-approx}
    Both
    $\optapprovalccav$ and $\optapprovalccavweighted$ are in \LOGAPX.
\end{proposition}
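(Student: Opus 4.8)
The plan is to model the problem as a covering integer program (CIP) and then invoke the $O(\ln m)$-approximation of Kolliopoulos and Young~\cite{kol-you:j:approx-cip}. The crucial first observation is that we may assume without loss of generality that every voter added from $W$ approves $p$: a voter that does not approve $p$ can never raise $\score{p}$, while it can only raise (or leave unchanged) the score of some other candidate, so including it can never help make $p$ a unique winner and may be discarded. Write $W_p \subseteq W$ for the sub-collection of voters approving $p$, let $s_0(c)$ be the score of $c$ in $(C,V)$ (in the weighted setting, the total weight of voters in $V$ approving $c$), and for $v \in W_p$ let $a_{v,c} \in \{0,1\}$ indicate whether $v$ approves $c$. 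Adding such a $v$ of weight $w(v)$ (with $w(v)=1$ in the unweighted case) raises $\score{p}$ by $w(v)$ and raises $\score{c}$ by $w(v)\,a_{v,c}$, so the $p$-versus-$c$ margin changes by the nonnegative amount $w(v)\,(1 - a_{v,c})$.

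Since adding a voter from $W_p$ never decreases any margin, only candidates $c \neq p$ with $s_0(c) \geq s_0(p)$ can threaten $p$, and this set of ``threat'' candidates is fixed in advance. For each threat candidate $c$, the requirement that $p$ strictly defeat $c$ in $(C, V\cup W')$ is, since all scores are integers, equivalent to $\score{p}\geq\score{c}+1$, which rearranges into a covering constraint with nonnegative coefficients:
$$\sum_{v \in W_p} w(v)\,(1 - a_{v,c})\, x_v \;\geq\; s_0(c) - s_0(p) + 1,$$
where $x_v \in \{0,1\}$ records whether $v$ is added and the right-hand side is a positive integer exactly because $c$ is a threat. Candidates with $s_0(c) < s_0(p)$ require no constraint, as $p$ stays ahead of them no matter which voters of $W_p$ are added. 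Minimizing $\sum_{v} x_v$ (unit costs, already satisfying $\max_j c_j = 1$ and matching the measure $1+\|W'\|$) subject to these constraints and the bounds $x_v \le 1$ is a CIP whose matrix is nonnegative; the voter weights enter only through the matrix entries, so the weighted and unweighted versions are produced by the identical construction.

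The CIP has at most $\|C\|-1$ constraints, so Kolliopoulos and Young's bound gives an $O(\ln m) = O(\log\|C\|) = O(\log n)$ approximation, where $n$ is the input size. The algorithm first checks feasibility in polynomial time by testing whether adding all of $W_p$ makes $p$ a unique winner; by the monotonicity noted above this simultaneously maximizes every margin, so the instance is feasible iff this test passes, and otherwise the algorithm outputs no solution. When feasible, we run the CIP approximation and return the corresponding collection $W'$. Because the measure $1 + \|W'\|$ exceeds the CIP objective $\|W'\|$ by exactly one, and the optimal control measure equals one plus the optimal CIP objective, the additive constant can only improve the ratio; hence the returned solution has performance ratio $O(\log n)$, placing both $\optapprovalccav$ and $\optapprovalccavweighted$ in $\LOGAPX$.

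The step I expect to be the main obstacle is getting the CIP matrix to be nonnegative: the naive encoding of ``$p$ beats $c$'' assigns a negative coefficient to any voter approving $c$ but not $p$, which the covering form forbids. Restricting attention to voters approving $p$ is exactly what removes these negative entries, and it is what makes the CIP formulation, and therefore the $O(\ln m)$ approximation, applicable.
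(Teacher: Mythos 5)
Your proposal is correct and follows essentially the same route as the paper's proof: restrict attention to unregistered voters approving $p$, write one covering constraint per candidate whose initial score is at least $p$'s, observe the resulting matrix is nonnegative, and invoke the $O(\ln m)$ CIP approximation of Kolliopoulos and Young, with the unweighted case as the unit-weight special case. Your coefficient $w(v)(1-a_{v,c})$ is just the paper's restriction of the sum to $B_c$ written explicitly, and your added remarks on the feasibility test and the harmless $+1$ in the measure are minor refinements rather than a different argument.
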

\begin{proof}
    Let $\calt = \optapprovalccavweighted$ and let $I = (C, V,\allowbreak W, w, p)$ be an instance of $\calt$.
    We assume without loss of generality that every vote in $W$ approves $p$ because adding votes that do not approve $p$ would only raise the score of other candidates rather than that of $p$. We also assume without loss of generality that no vote in $W$ approves every candidate as such a vote can be excluded from the vote addition without changing the outcome of the election.

    For a candidate $x$, let $V_x$ be the set of votes that approve $x$. We define the score of $x$ in the election as
    $\score{x} = \sum_{v \in V_x} w(v).$
    If $p$ is already a unique approval winner, we need not add any votes. 
     Since each additional vote increases $p$’s score by that vote's weight and increases that of any other candidate by the same value, any candidate whose initial score is strictly less than that of $p$ can never overtake~$p$.

    Let $C'$ denote the set of candidates whose score is at least that of $p$. For each $c \in C'$ define
    \begin{itemize}
        \item $B_c = \{ v \in W \mid v \text{ approves } p \text{ but not } c\}$ and
        \item $\Delta_c = \score{c} - \score{p}.$
    \end{itemize}
    
    For every $c \in C'$, we must add votes drawn from $B_c$ whose total contribution to $p$'s score is at least $\Delta_c+1$, thereby ensuring that $p$ finishes with strictly more approvals than every other candidate. It follows that
    \[
      W  = \bigcup_{c \in C'} B_c.
    \]
    For every vote $v \in W$, define the binary variable
    \[
      x_v =
      \begin{cases}
        1 & \text{if vote } v \text{ is added},\\
        0 & \text{otherwise}.
      \end{cases}
    \]
    We now obtain the minimum number of added votes required to make $p$ a unique
    approval winner by solving the following integer linear program:
    \[
    \begin{aligned}
    \text{Minimize} & \sum_{v\in W} x_v \\
    \text{subject to} \\
      &\sum_{v\in B_c} w(v)x_v \ge \Delta_c + 1
      && \text{for each } c\in C',\\
    & x_v \in \{0,1\}
      && \text{for each } v\in W.
    \end{aligned}
    \]

    Let us now argue the correctness of our algorithm.
    
    \begin{lemma}\label{lemma:cip-iff-control}
        Let $I=(C,V,W,w,p)$ be an instance of \optapprovalccavweighted\ and let $\calp = (A, b, c, d)$ be the covering integer program defined above. There exists a feasible integer solution to $\calp$ of objective value $k$ if and only if there exists a subset $W'\subseteq W$ with $\|W'\| = k$ such that $p$ is a unique approval winner of $(C, V \cup W')$. %
    \end{lemma}
    \begin{proof}
        For the left to right direction, let $\{x_v^*\}_{v\in W}$ be a feasible solution of $\calp$ with $\sum_{v\in W}x_v^*=k$ and choose $W'=\{v\in W\mid x_v^*=1\}$.
        For any $c\in C'$ we have
        \[
          \score{p}+\sum_{v\in B_c}w(v)x_v^*\ge\score{p}+\Delta_c+1>\score{c},
        \]
        while the scores of candidates $C-C'$ are already strictly below that of $p$. Hence, after adding votes $W'$, $p$ is a unique winner. The number of added votes equals the CIP objective.

        For the right to left direction, suppose $W'\subseteq W$ with $\|W'\|=k$ makes $p$ a unique winner when added. Necessarily $W'\subseteq B$, because adding a voter who disapproves $p$ could not help $p$ win.
        For every $c\in C'$ we must have
        \[
          \!\!\sum_{v\in W'\cap B_c}w(v)\ge\Delta_c+1,
        \]
        Defining $x_v=1$ if $v\in W'$ and 0 otherwise yields a feasible integer solution of
        $\calp$ whose objective value is $k$.
        
        Thus $\calp$ has a solution of cost $k$ if and only if there exists a subset $W'\subseteq W$ with $\|W'\| = k$ such that $p$ is a unique approval winner of $(C, V \cup W')$.
    \end{proof}

    The resulting integer linear program is an instance of CIP which admits an $O(\ln m)$-approximation algorithm (and this factor is optimal unless $\P = \NP$)~\cite{kol-you:j:approx-cip}. By Lemma~\ref{lemma:cip-iff-control}, solving this CIP will give the set of added votes for $\optapprovalccavweighted$ thus showing the problem in \LOGAPX.

    Since the \optapprovalccav\ is a special case of the \optapprovalccavweighted\ by setting the output of the weight function $w$ to always be one, we have that \optapprovalccav\ is also in \LOGAPX.
\end{proof}

Having demonstrated that \optapprovalccav\ and \optapprovalccavweighted\ belong to \LOGAPX, we now prove their \LOGAPX-completeness through a strict reduction from the Log-APX-complete problem Minimum Set Cover.

\OPTDEF{Minimum Set Cover (MSC)}
{A finite set $\calu = \{ x_1, x_2,\dots, x_n \}$ of $n$ elements and a family $\mathcal{S} = \{S_1, S_2,\dots, S_m\}$ of subsets $S_i$ of $\calu$, where the union of all sets in $\cals$ covers $\calu$.}
{A subfamily $\cals' \subseteq \cals$ such that $\calu = \bigcup_{S \in \cals'} S$.}
{The value $\|\cals'\|$.}

\begin{theorem}
    Both \optapprovalccav\ and \optapprovalccavweighted\ are \LOGAPXCOM.
\end{theorem}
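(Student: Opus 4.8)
The plan is to establish \LOGAPXCOM\ for both problems by combining the membership result already in hand with a strict reduction from Minimum Set Cover (MSC), which is a known \LOGAPXCOM\ problem. Since Proposition~\ref{prop:opt-approval-ccav-w-approx} already gives membership of \optapprovalccav\ and \optapprovalccavweighted\ in \LOGAPX, and since \LOGAPX-completeness is witnessed by a strict reduction from a \LOGAPXCOM\ problem (strict reductions being transitive and implying AP reductions), it suffices to exhibit a strict reduction $\text{MSC} \strictred \optapprovalccav$. Because the unweighted problem is a special case of the weighted one and both lie in \LOGAPX, a single strict reduction into the unweighted version gives completeness for both.

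The key construction is to encode an MSC instance $(\calu, \cals)$ as an approval election. First I would let the candidate set consist of the distinguished candidate $p$ together with one ``element-candidate'' $c_i$ for each $x_i \in \calu$, so that covering element $x_i$ corresponds to raising $p$'s score above $c_i$'s. The fixed voters $V$ would be set so that, before any additions, each element-candidate $c_i$ beats (or ties) $p$ by exactly one approval, giving $\Delta_{c_i} = 0$ and hence a per-candidate requirement of one approving voter in the CIP-style analysis of the membership proof. The addable voters $W$ would correspond bijectively to the sets $S_j \in \cals$: the voter $v_j$ associated with $S_j$ approves $p$ and approves exactly those element-candidates $c_i$ with $x_i \notin S_j$ (equivalently, $v_j \in B_{c_i}$ iff $x_i \in S_j$). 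Then adding $v_j$ raises $p$'s score over precisely the $c_i$ with $x_i \in S_j$, so a subset $W' \subseteq W$ makes $p$ a unique winner if and only if the corresponding subfamily $\cals'$ covers $\calu$. I would need to pad the construction (with dummy candidates or voters) so that no unintended candidate can overtake $p$ and so that $p$'s uniqueness is governed solely by the cover condition.

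With $f$ defined as this construction and $g$ mapping a solution $W'$ back to the subfamily $\cals' = \{S_j \mid v_j \in W'\}$, the measures line up exactly: $\mu_{\text{MSC}}(\cals') = \|\cals'\| = \|W'\| = \mu_{\optapprovalccav}(f(\calu,\cals), W') - 1$, and by the iff above the optimal values also coincide (up to the additive $+1$ in the control measure). I would then verify that $R_{\text{MSC}}((\calu,\cals), g(W')) \le R_{\optapprovalccav}(f(\calu,\cals), W')$ holds for every feasible $W'$, which is exactly the strict-reduction inequality. The additive $+1$ in the control measure is harmless here because it is present on both optimum and solution value, so the ratio is preserved (and if anything slightly decreased), which is precisely what a strict reduction permits.

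The main obstacle I anticipate is not the bijection itself but controlling the election scores so that the reduction is genuinely strict rather than merely approximation-preserving in a weaker sense. Concretely, I must ensure that (i)~only voters approving $p$ are ever useful, matching the $W' \subseteq B$ argument from Lemma~\ref{lemma:cip-iff-control}, so that any feasible $W'$ really does correspond to a legitimate subfamily with no ``wasted'' voters inflating the count; (ii)~the gaps $\Delta_{c_i}$ are set so that exactly one approving voter per uncovered element is required, making the set-cover structure exact; and (iii)~no spurious candidate (including any padding candidates) can become a co-winner and spoil $p$'s \emph{unique}-winner status. Getting these score bookkeeping details right—so that $g$ maps every feasible solution to a feasible cover of no greater relative cost—is the delicate part; the complexity-class plumbing afterward is routine given the membership result and the transitivity of strict reductions.
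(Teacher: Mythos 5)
Your proposal matches the paper's proof essentially step for step: membership from Proposition~\ref{prop:opt-approval-ccav-w-approx}, a strict reduction from Minimum Set Cover in which each set $S_j$ becomes an addable voter approving $p$ and exactly the element-candidates outside $S_j$, the back-map $g(W') = \{S_j \mid v_j \in W'\}$, and the lift to the weighted version by unit weights. The paper instantiates your ``$\Delta_{c_i}=0$'' setup simply by taking $V=\emptyset$ (no padding candidates or voters are needed, since $p$ must be a \emph{unique} winner and all scores start tied at zero), so your anticipated score-bookkeeping obstacle dissolves.
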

\begin{proof}
    It is clear that \optapprovalccav\ strict reduces to {\rm opt-approval-CCAV-weighted} and that both problems are in $\LOGAPX$, so it suffices to prove that MSC strict reduces to \optapprovalccav.
    For brevity, let $\calt = \optapprovalccav$.

    Let $f$ be a function that given an MSC instance $(\calu,\cals)$, where $\calu = \{ x_1,\dots, x_n \}$, $\mathcal{S} = \{S_1,\dots, S_m\}$, and $S_i \subseteq \calu$ for each $i \in [m]$, outputs a $\calt$ instance $(C, V, W, p)$ such that
    \begin{itemize}
        \item the candidate set is $C = \calu \cup \{p\}$, where $p$ is the distinguished candidate,
        \item the set of registered votes is $V = \emptyset$, and 
        \item the set of unregistered votes $W$ consists of $m$ votes: For each $i\in [m]$, there is one vote $w_i$ in $W$ that only approves the candidates in $C-S_i$.
    \end{itemize}

    Let $g$ be a function that given an instance $I = (\calu, \cals)$ of MSC and a solution $W'$ of $f(I) = (C, V, W, p)$ outputs $\widehat{\cals} = \{S_i \mid w_i \in W'\} \subseteq S$ as a solution to instance $I$ w.r.t MSC.

    Note that both $f$ and $g$ are polynomial-time computable. We now prove the correctness of our reduction.

       We now prove that $\OPT_{\calt}(f(I)) = \OPT_{{\rm MSC}}(I)$. 
    Let $\cals^*$ denotes an optimal solution to MSC on input $I$ and $W^*$ denotes an optimal solution to $\calt$ on input $f(I)$.

    First, we show that $\OPT_{\calt}(f(I)) \ge \OPT_{{\rm MSC}}(I)$. 
    If $p$ can be made the unique approval winner, then for every candidate $c \in C - \{p\}$ we must add at least one vote that approves $p$ but not $c$, so that $p$’s approval count exceeds that of $c$. Because such a vote is added for each $c \in C - \{p\}$, the selected votes correspond to a solution of size $\card{W^*}$ to $I$, i.e., $\OPT_{\calt}(f(I)) = \card{W^*} \geq \OPT_{{\rm MSC}}(I)$.

    Next, we show that  $\OPT_{\calt}(f(I)) \le \OPT_{{\rm MSC}}(I)$. 
    Consider the votes $W_{\cals^*} = \{w_i \mid S_i \in \cals^*\}$. Fix any candidate $x_j \in \calu$. Since $\cals^*$ covers $\calu$, there exists a $S_{i}\in \cals^*$ containing $x_j$, and the corresponding vote $w_{i}$ disapproves $x_j$ and approves $p$. Since all added votes approve $p$, $p$ is a unique approval winner. Thus we have produced a solution of size $\card{\cals^*}$ to $f(I)$, i.e., $\OPT_{{\rm MSC}}(I) = \card{\cals^*} \geq \OPT_{\calt}(f(I))$.

    Finally, let $W'$ be an arbitrary solution to $\calt$ on input $f(I)$. We now prove that $$R_{{\rm MSC}}(I, g(I, W')) \leq R_\calt(f(I), W').$$ Let $\cals' = g(I, W')$. By definition and the fact that $\OPT_{\calt}(f(I)) = \OPT_{{\rm MSC}}(I)$, it suffices to show that $\card{\cals'} \leq \card{W'}$. In the computation of $g$, we have $\card{\widehat{\cals}} = \card{W'}$, so $\card{\cals'} = \card{W'}$. Therefore $\card{\cals'} \leq \card{W'}$.

    Thus MSC strict reduces to $\optapprovalccav$.
\end{proof}

We now prove that \optapprovalccdv\ and \optapprovalccdvweighted\ are \LOGAPX-complete
by using a similar approach to the one just used.

\begin{proposition}\label{prop:opt-approval-ccdv-w-approx}
    Both \optapprovalccdv\ and \optapprovalccdvweighted\ are in \LOGAPX.
\end{proposition}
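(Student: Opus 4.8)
The plan is to mirror the proof of Proposition~\ref{prop:opt-approval-ccav-w-approx} almost verbatim, modeling deletion as a covering integer program and invoking the $O(\ln m)$-approximation for CIPs from Kolliopoulos and Young~\cite{kol-you:j:approx-cip}. Let $I = (C, V, w, p)$ be an instance of \optapprovalccdvweighted. The key structural observation, dual to the one used for adding voters, is that it never helps to delete a voter who approves $p$: such a deletion lowers $\score{p}$ by that voter's weight while lowering the score of any co-approved candidate by the same amount, so every gap $\score{c} - \score{p}$ stays the same or grows. Hence I may restrict attention to deletions of voters that do not approve $p$, and consequently $\score{p}$ is invariant under any deletion I consider.

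Because $\score{p}$ is fixed, I only need to push each currently-threatening candidate strictly below $p$. Let $C' = \{c \in C - \{p\} \mid \score{c} \ge \score{p}\}$; every $c \notin C' \cup \{p\}$ is already below $p$ and, since deletions only decrease scores, stays below. For each $c \in C'$ set $\Delta_c = \score{c} - \score{p} \ge 0$ and let $D_c = \{v \in V \mid v \text{ approves } c \text{ but not } p\}$ be the voters whose deletion lowers $c$ without touching $p$. Introducing a binary variable $x_v$ for each $v$ that does not approve $p$ (with $x_v = 1$ meaning ``delete $v$''), I obtain the program: minimize $\sum_v x_v$ subject to $\sum_{v \in D_c} w(v)\, x_v \ge \Delta_c + 1$ for each $c \in C'$, and $x_v \in \{0,1\}$. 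Since deleting a voter reduces a threatened candidate's score, the requirement is naturally a lower bound on deleted weight, so all coefficients are nonnegative and the program is a genuine CIP: the matrix $A$ has entries $w(v) > 0$, the right-hand sides $\Delta_c + 1 \ge 1$ are positive, the upper bounds are $d = 1$, and the objective coefficients are all $1$ (so $\max_j c_j = 1$). I would then state and prove the exact analogue of Lemma~\ref{lemma:cip-iff-control}: a feasible integer solution of cost $k$ corresponds to a deletion set $V' \subseteq V$ with $\|V'\| = k$ after which $p$ is a unique approval winner, and conversely; the forward direction uses that each satisfied constraint drops $\score{c}$ below $\score{p}$, and the reverse direction uses the observation above to argue that any witnessing $V'$ may be taken to contain only non-$p$-approvers.

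Applying the $O(\ln m)$-approximation for CIPs~\cite{kol-you:j:approx-cip} to this program, and translating the returned solution back via the lemma, yields an $O(\ln m)$-approximation for \optapprovalccdvweighted, placing it in \LOGAPX; here $m = \|C'\| \le \|C\|$ is polynomial in the input size. Membership of \optapprovalccdv\ then follows immediately as the special case $w \equiv 1$. The main obstacle is not the CIP machinery, which transfers directly, but getting the structural reduction and the infeasibility handling exactly right: I must verify that restricting to non-$p$-approvers loses no optimal solution, and confirm that instances admitting no control solution are precisely those where some $c \in C'$ has $D_c = \emptyset$ (every approver of $c$ also approves $p$), in which case the constraint $\sum_{v \in D_c} w(v)\, x_v \ge \Delta_c + 1$ is vacuously unsatisfiable and both the control problem and the CIP correctly report infeasibility.
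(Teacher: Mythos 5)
Your proposal matches the paper's proof essentially verbatim: the same observation that one need only delete voters who do not approve $p$ (so $\score{p}$ is invariant), the same CIP with constraints $\sum_{v\in D_c} w(v)\,x_v \ge \Delta_c+1$ for each $c$ with $\score{c}\ge\score{p}$, the same correspondence lemma between CIP solutions of value $k$ and deletion sets of size $k$, the same appeal to the $O(\ln m)$ CIP approximation of Kolliopoulos and Young, and the same specialization $w\equiv 1$ for the unweighted case. The only slip is your closing characterization of infeasibility: $D_c=\emptyset$ is not the precise condition (e.g., if every $p$-approver also approves $c$ but not conversely, then $D_c\neq\emptyset$ yet $p$ can never strictly beat $c$; the right condition is $\sum_{v\in D_c}w(v)\le\Delta_c$), but this is immaterial since your correspondence lemma already makes the CIP infeasible exactly when the control instance is.
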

\begin{proof}
    Let $\calt = \optapprovalccdvweighted$ , $I = (C, V, w,  p)$ be an instance of $\calt$.
    For a candidate $x$, let $V_x$ be the set of votes that approve $x$. We define the score of $x$ in the election
    \[
    \score{x} = \sum_{v \in V_x} w(v).
    \]
    If $p$ is already the unique approval winner, we need not delete any votes that.  Otherwise, we delete only votes that do not approve $p$, because deleting votes that approve $p$ would lower the score of $p$ rather than that of its opponents.  Since each deleted
    voter reduces the score of every candidate he approves, while $p$’s approval count remains unchanged, any candidate whose initial score is strictly less than the score of $p$ can never overtake $p$.

    Let $C'$ denote the set of candidates whose score is at least the score of $p$.  For each $c \in C'$ define
    \begin{itemize}
        \item $B_c = \{ v \in V \mid v \text{ approves } c \text{ but not } p\}$
        \item $\Delta_c = \score{c} - \score{p}.$
    \end{itemize}
    For every $c\in C'$, we must delete votes that from $B_c$ whose total contribution to $c$’s score is at least $\Delta_c+1$, ensuring that, after the deletions, $p$ has strictly more approvals than any other candidate and thus becomes the unique approval winner. Let
    \[
      B = \{v \in V \mid v \text{ does not approve } p\}
          = \bigcup_{c \in C'} B_c .
    \]
    For every voter $v \in B$ introduce the binary variable
    \[
      x_v =
      \begin{cases}
        1 & \text{if voter } v \text{ is deleted},\\
        0 & \text{otherwise}.
      \end{cases}
    \]
    We now obtain the minimum number of deleted votes that required to make $p$ the unique
    approval winner by solving the following integer linear program:
    \[
    \begin{aligned}
    \text{Minimize} & \sum_{v\in B} x_v \\
    \text{subject to} \\
    &\sum_{v\in B_c} w(v)x_v \ge \Delta_c + 1
      && \text{for each } c\in C',\\
    & x_v \in \{0,1\}
      && \text{for each } v\in B .
    \end{aligned}
    \]

    Let us now argue the correctness of our algorithm.
    
    \begin{lemma}\label{lemma:cip-iff-approvalccdv}
        Let $I=(C,V,w,p)$ be an instance of \optapprovalccdvweighted\ and let $\calp = (A, b, c, d)$ be the covering integer program defined above. There exists a feasible integer solution to $\calp$ of objective value $k$ if and only if there exists a subset $V'\subseteq V$ with $\|V'\| = k$ such that $p$ is a unique approval winner of $(C, V-V')$. %
    \end{lemma}

    \begin{proof}
        For the left to right direction, let $\{x_v^*\}_{v\in B}$ be a feasible solution of $\calp$ with $\sum_{v\in B}x_v^*=k$ and choose $V'=\{v\in B\mid x_v^*=1\}$.
        For any $c\in C'$ we have
        \[
          \score{c}-\sum_{v\in B_c}w(v)x_v^*\le\score{c}-(\Delta_c+1)<\score{p},
        \]
        while the scores of candidates $C-C'$ are already strictly below that of $p$. Hence, after deleting votes $V'$, $p$ is a unique winner. The number of added votes equals the CIP objective.

        For the right to left direction, suppose $V'\subseteq V$ with $\|V'\|=k$ makes $p$ a unique winner when deleted. Necessarily $V'\subseteq B$, because deleting a voter who approves $p$ could not help $p$ win.
        For every $c\in C'$ we must have
        \[
          \!\!\sum_{v\in V'\cap B_c}w(v)\ge\Delta_c+1,
        \]
        Defining $x_v=1$ if $v\in V'$ and 0 otherwise yields a feasible integer solution of
        $\calp$ whose objective value is $k$.
        
        Thus $\calp$ has a solution of cost $k$ if and only if there exists a subset $V'\subseteq V$ with $\|V'\| = k$ such that $p$ is a unique approval winner of $(C, V-V')$.
    \end{proof}
    
    The resulting integer linear program is an instance of CIP which admits an $O(\ln m)$-approximation algorithm (and this factor is optimal unless $\P = \NP$)~\cite{kol-you:j:approx-cip}. By Lemma~\ref{lemma:cip-iff-approvalccdv}, solving this CIP will give the set of deleted votes for $\optapprovalccdvweighted$ thus showing the problem in \LOGAPX.

    Since the \optapprovalccdv\ is a special case of the \optapprovalccdvweighted\ by setting the output of the weight function $w$ to always be one, we have that \optapprovalccdv\ is also in \LOGAPX.
\end{proof}

\begin{theorem}
    Both \optapprovalccdv\ and \optapprovalccdvweighted\ are \LOGAPXCOM.
\end{theorem}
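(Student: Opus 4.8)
The plan is to mirror exactly the structure used for the \optapprovalccav\ completeness proof, leveraging the fact that the ``hard direction'' (membership in \LOGAPX) has already been handled by Proposition~\ref{prop:opt-approval-ccdv-w-approx}. Since strict reductions are transitive and \optapprovalccdv\ reduces trivially to \optapprovalccdvweighted\ (by setting all weights to one), it suffices to establish that \optapprovalccdv\ itself is \LOGAPXCOM. Membership in \LOGAPX\ is already in hand, so the entire task reduces to exhibiting a strict reduction from the \LOGAPXCOM\ problem Minimum Set Cover (MSC) to \optapprovalccdv.

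First I would define the reduction functions $f$ and $g$. Given an MSC instance $(\calu, \cals)$ with $\calu = \{x_1,\dots,x_n\}$ and $\cals = \{S_1,\dots,S_m\}$, the function $f$ produces the \optapprovalccdv\ instance $(C, V, p)$ where $C = \calu \cup \{p\}$ and the voter set $V$ encodes the set system ``dually'' to the CCAV construction. The key design decision is that here voters are \emph{deleted} rather than added, so I would set things up so that each set $S_i$ corresponds to one deletable voter $v_i$ who approves exactly the candidates in $S_i$ (and not $p$), together with a fixed block of baseline voters who approve $p$ enough times that $p$ starts tied with (or just below) each $x_j$ and becomes the unique winner precisely when every $x_j$ has had at least one approving voter removed. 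Deleting $v_i$ removes one approval from each candidate in $S_i$; requiring that every $x_j$ drops below $p$ corresponds exactly to covering every element of $\calu$. The function $g$ maps a solution $V' \subseteq V$ back to $\widehat{\cals} = \{S_i \mid v_i \in V'\}$.

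Next I would verify the two equalities and the ratio inequality, following the template verbatim. I would show $\OPT_{\calt}(f(I)) = \OPT_{{\rm MSC}}(I)$ by the same two contradiction arguments: if deleting fewer than $\|\cals^*\|$ voters sufficed, the corresponding sets would cover $\calu$ with fewer sets, contradicting optimality of $\cals^*$; conversely, the voters indexed by an optimal cover $\cals^*$ form a feasible deletion set, so $\OPT_{\calt}(f(I)) \le \OPT_{{\rm MSC}}(I)$. Finally, for an arbitrary solution $V'$, since $g$ satisfies $\|g(I,V')\| = \|V'\|$ and the optima coincide, I get $R_{{\rm MSC}}(I, g(I,V')) \le R_{\calt}(f(I), V')$, establishing strictness.

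The main obstacle is getting the gadget in $f$ exactly right so that the correspondence ``delete $v_i$'' $\leftrightarrow$ ``select $S_i$'' is \emph{tight}: I must ensure (i) that no voter approving $p$ is ever deleted in an optimal (or indeed any helpful) solution, so that $V' \subseteq \{v_1,\dots,v_m\}$ is forced, and (ii) that the margins are calibrated so $p$ becomes the unique winner if and only if every element $x_j$ is covered, with exactly one deletion per set counted. The delicate point, absent in the additive CCAV case, is controlling the baseline approvals of $p$ and of each $x_j$ so that removing a single approving voter per candidate is both necessary and sufficient; once the arithmetic of these starting scores is pinned down, the remainder of the proof is a routine transcription of the \optapprovalccav\ argument.
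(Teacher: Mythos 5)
Your proposal is correct and follows essentially the same route as the paper: reduce from Minimum Set Cover by making the elements of $\calu$ (plus $p$) the candidates, letting each set $S_i$ correspond to a deletable voter approving exactly $S_i$ and not $p$, and adding padding voters (all approving $p$) that equalize every candidate's score, so that covering $\calu$ is exactly equivalent to making $p$ the unique winner; the calibration you flag as the ``delicate point'' is resolved in the paper precisely by tying all initial scores. The remaining steps (trivial strict reduction to the weighted version, membership via the CIP proposition, and the $\|g(I,V')\| = \|V'\|$ ratio argument) match the paper's proof verbatim.
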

\begin{proof}
    Proposition~\ref{prop:opt-approval-ccdv-w-approx} establishes the $\LOGAPX$ membership.

    It is clear that \optapprovalccdv\ strict reduces to ${\rm opt}\dash{\rm approval}\dash{\rm CCDV}\dash{\rm weighted}$ and that both problems are in $\LOGAPX$, so it suffices to prove that MSC strict reduces to \optapprovalccdv. Let $\calt = \optapprovalccdv$.

        Let $f$ be a function that, given an instance of $(\calu,\cals)$ of MSC where\  $\calu = \{ x_1,..., x_n \}$, $\mathcal{S} = \{S_1,\dots, S_m\}$, and $S_i \subseteq \calu$ for each $i\in [m]$, outputs an instance $(C, V, p)$ of $\calt$:
    \begin{itemize}
        \item The candidate set is $C = \cals \cup \{p\}$, where $p$ is the distinguished candidate.
        \item The voter set is $V$ is defined as follows:
        \begin{enumerate}
            \item For each $i\in [m]$, there is one voter $w_i$ who approves every candidate in $S_i$ and disapproves $d$ as well as every candidate not in~$S_i$.
            \item Let \score{c} denote the current approval count of candidate $c$, and set $k = \max_{c\in C}\score{c}$. There are $k$ additional voters, each of whom approves \(p\) (and any subset of the other candidates as needed) so that, after their addition, every candidate has the same approval count \(k\).
        \end{enumerate}
    \end{itemize}
    For simplicity, we will say ``Type~$i$ votes'' to refer to those votes that are defined within item $i$ above.
    
    Let $g$ be a function that, given an instance $I = (\calu, \cals)$ of MSC and a solution $V'$ of $f(I) = (C, V, p)$ outputs $\widehat{\cals} = \{S_i \mid w_i \in V'\}$ aa a solution for instance $I$ w.r.t. MSC.

    Note that both $f$ and $g$ are polynomial-time computable. We now prove the correctness of our reduction.
    
    We prove that $\OPT_{\calt}(f(I)) = \OPT_{{\rm MSC}}(I)$. Let $\cals^*$ denotes an optimal solution to MSC on input $I$ and $V^*$ denotes an optimal solution to $\calt$ on input $f(I)$.

    First, we show that $\OPT_{\calt}(f(I)) \ge \OPT_{{\rm MSC}}(I)$. 
    Every candidate have the same approval count of $k$. We delete only votes that do not approve $p$, because deleting votes that approve $p$ would lower $p$’s approval count rather than that of its opponents, so we cannot delete any Type 2 votes. If $p$ can be made the unique approval winner, then for every candidate $c \in C - \{p\}$, we must delete at least one vote that approves $c$ but not $p$ from Type 1 votes, so that $p$’s approval count becomes strictly greater than that of $c$. Since we delete such a vote for each $c \in C - \{p\}$, the selected votes constitute a set cover of size $\card{V'}$ to $I$. 
    Thus $\OPT_{\calt}(f(I)) =  \card{V'} \ge \OPT_{{\rm MSC}}(I)$

    Next, we show that  $\OPT_{\calt}(f(I)) \le \OPT_{{\rm MSC}}(I)$. 
    Consider deleting the votes $V_{\cals^*} = \{w_i \mid S_i \in \cals^*\}$. Fix any candidate $x_j \in \calu$. Since $\cals^*$ covers $\calu$, there exists a $S_{i}\in \cals^*$ containing $x_j$, and the corresponding vote $w_{i}$ approves $x_j$ and disapproves $p$. Since all deleted votes disapprove $p$, $p$ is the unique approval winner. Thus we have produced a solution of size $\card{\cals^*}$ for $\calt$. Thus $\OPT_{{\rm MSC}}(I) = \card{\cals^*} \geq \OPT_{\calt}(f(I))$.

    Finally, let $V'$ be an arbitrary solution to $\calt$ on input $f(I)$. We now prove that $R_{{\rm MSC}}(I, g(I, V')) \leq R_\calt(f(I), V')$. Let $\cals' = g(I, V')$. By definition and the fact that $\OPT_{\calt}(f(I)) = \OPT_{{\rm MSC}}(I)$, it suffices to show that $\card{\cals'} \leq \card{V'}$. In the computation of $g$, we have $\card{\widehat{\cals}} = \card{V'}$, so $\card{\cals'} = \card{V'}$. Therefore $\card{\cals'} \leq \card{V'}$.
\end{proof}

Faliszewski, Hemaspaandra, and Hemaspaandra~\cite{fal-hem-hem:j:weighted-control} present approximation algorithms for constructive control by adding and deleting voters under $t$-approval and $t$-veto, with weighted votes by a reducing to Weighted Set Multicover, which is a special case of CIP\@.  Although 1-approval is plurality, their control problems (under plurality) are in $\P$, so their work does not subtract from ours.
Bredereck et al.~\cite{bre-fal-nie-sko-tal:j:mixinteger-lp} essentially gave the analogous result to Propositions~\ref{prop:opt-approval-ccav-w-approx} and~\ref{prop:opt-approval-ccdv-w-approx} in the \emph{nonunique-winner} model (recall that this paper is in the unique-winner model); in their work, they gave polynomial-time many-one reductions from those problems to the Uniform Multiset Multicover (another special case of CIP), and we mention that their reductions are in fact strict reductions. 
Unaware of their construction, we gave our result in terms of CIP, but we mention that our approach and that of Bredereck et al.~\cite{bre-fal-nie-sko-tal:j:mixinteger-lp} are similar in flavor.

\subsection{Plurality and Candidate Deletion}

In this section, we prove that constructive control by deleting candidates under plurality (both with weighted and unweighted votes) is approximable, by giving a simple and greedy $O(m)$-approximation algorithm, where $m = \card{C}$. Moreover, we give a fine-grained conditional lower bound of $\Omega(m^{1/4})$. %

To give the upper bound (i.e., the approximability result) we  appeal to notion of voicedness: a voting rule is said to be voiced is in any one-candidate election, that voting rule elects that candidate as the unique winner. Every voting rule in this paper is clearly voiced.

\begin{lemma}\label{lemma:ccdc-approx}
    Let $\cale$ be a voiced voting rule.
    Then
     constructive control by deleting candidates under $\cale$, both with weighted and unweighted votes,
    is in $\POLYAPX$, and each problem admits an $O(m)$-approximation.
\end{lemma}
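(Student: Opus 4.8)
The plan is to exhibit a simple algorithm that exploits voicedness to always produce a feasible solution, and then to bound its performance ratio by a short case analysis. Concretely, on input $I = (C, V, p)$ (with a weight function in the weighted case), the algorithm first determines the winner set of the election $(C, V)$ under $\cale$. If $p$ is already the unique winner, it outputs $C' = \emptyset$; otherwise it outputs $C' = C - \{p\}$, deleting every candidate except $p$.

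First I would verify that this algorithm runs in polynomial time and always returns a feasible solution. Polynomiality is immediate provided $\cale$ is polynomial-time computable, which is exactly what is required for $\cale\dash\CC\DC$ to be an $\NPO$ problem in the first place (so that feasibility is decidable in polynomial time, per Definition~\ref{d:npo}). Feasibility in the first case is trivial. In the second case, $C - \{p\}$ omits $p$ and the surviving election is the one-candidate election $(\{p\}, V)$; since $\cale$ is voiced, $p$ is the unique winner of that election. Hence a feasible solution always exists, and the algorithm always returns one.

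Next I would bound the performance ratio $R(I, C') = \max\{\mu(I, C')/\OPT(I),\, \OPT(I)/\mu(I, C')\}$, recalling that the measure is $1 + \|C'\|$. If $p$ is already the unique winner, then $C' = \emptyset$ is feasible with measure $1$, the least possible value, so $\OPT(I) = 1$ and the ratio is exactly $1$. If $p$ is not already the unique winner, then every feasible solution must delete at least one candidate, so $\OPT(I) \geq 2$, while our output has measure $1 + (\|C\| - 1) = \|C\| = m$. Since our output is feasible we also have $\OPT(I) \leq m$, whence $\OPT(I)/\mu(I, C') \leq 1$ and $\mu(I, C')/\OPT(I) \leq m/2$. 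Thus $R(I, C') \leq m/2 = O(m)$. Because $m \leq |I|$, this is an $O(|I|)$-approximation, placing the problem in $\POLYAPX$ with exponent $k = 1$.

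I do not expect any serious obstacle: the entire content of the lemma lies in the single observation that voicedness makes ``delete everything but $p$'' feasible, which simultaneously guarantees that a solution exists and caps our output's measure at $m$. The only points requiring care are (i) pinning down $\OPT(I) \geq 2$ in the nontrivial case, which is what upgrades the crude bound $m$ into a genuine $O(m)$ ratio rather than merely a feasible output, and (ii) noting that the whole argument is insensitive to voter weights, since voicedness concerns one-candidate elections and holds regardless of how votes are weighted; the identical algorithm and analysis therefore cover the weighted and unweighted variants at once.
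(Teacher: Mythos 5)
Your proposal is correct and follows essentially the same approach as the paper: voicedness makes $C-\{p\}$ a feasible output, and the measure $m$ divided by a lower bound on $\OPT$ gives the $O(m)$ ratio. Your version adds a minor refinement (outputting $\emptyset$ when $p$ already wins, and using $\OPT \geq 2$ in the nontrivial case to get $m/2$), whereas the paper just outputs $C-\{p\}$ unconditionally and uses $\OPT \geq 1$; both yield the same asymptotic bound.
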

\begin{proof}
    Let $\calt$ denote the optimization problem for constructive control by deleting candidates under $\cale$ with weighted votes,
    let $I = (C, V, w, p)$ be an instance of $\calt$, and let $m=\|C\|$\@. Our $O(m)$-approximation algorithm for $\calt$ simply outputs $C-\{p\}$.

    Since $\cale$ is voiced,
    it's clear that the algorithm always outputs a correct solution and runs in polynomial time. Let us now prove that the algorithm is an $O(m)$-approximation algorithm.
    By definition, $\OPT_{\calt}(I) \geq 1$, so the performance ratio is 
        $$\frac{\mu_{\calt}(I, C-\{p\})}{\OPT_{\calt}(I)} = \frac{m}{\OPT_{\calt}(I)} \leq m,$$
    which concludes our proof that the algorithm is an $O(m)$-approximation algorithm. The $\POLYAPX$ membership follows from the fact that $m\leq |I|$.

    This also yields an $O(m)$-approximation algorithm for the unweighted version of $\calt$.
\end{proof}

The scope of our paper is to determine which NP-hard electoral control problems (both with weighted and unweighted votes) have approximation algorithms; for constructive control by deleting candidates under plurality, we have achieved our goal. However to supplement that result,  we provide a lower bound by leveraging the best known (conditional) lower on the Minimum $k$-Union problem (defined below).

\OPTDEF{Minimum $k$-Union (M$k$U)}
{A nonempty set $\calu = \{u_1, \ldots, u_n\}$, a family $\cals = \{S_1, \ldots, S_m\}$ of nonempty subsets of $\calu$, and a natural number $k \leq m$.}
{A set $\cals' \subseteq \cals$ of size $k$.}%
{The value $\|\bigcup_{S \in \cals'} S\|$.}

It is known that assuming the so-called ``Hypergraph Dense versus Random'' Conjecture (HDR), the lower bound on the approximation ratio for $\MkU$ is $\Omega(m^{1/4})$~\cite{chl-din-mak:c:min-k-union}.\footnote{The definition of HDR is beyond the scope of this paper, so we refer interested readers to the work of Chlamt{\'a}{\v{c}}, Dinitz, and Makarychev~\cite[Conjecture~1.5]{chl-din-mak:c:min-k-union} for details on it.} 
It is also known that MkU is at least as hard to approximate as MSC, so the approximation lower bound on the approximation ratio for $\MkU$ is in $\Omega(\log{n})$, unconditionally.

We next give a strict reduction from M$k$U to opt-$\pluralityccdc$, thereby showing that giving a better-than-$O(m^{1/4})$-approximation algorithm for $\pluralityccdc$ would resolve HDR\@.

\begin{theorem}\label{t:mku-ccdc}
    $\MkU \strictred \optd\pluralityccdc$.
\end{theorem}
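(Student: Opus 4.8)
The plan is to exhibit polynomial-time maps $f$ (on instances) and $g$ (on instance–solution pairs) for which the two performance ratios coincide, so that $R_{\MkU}(x, g(x,y)) \le R_{\calt}(f(x), y)$ holds for $\calt = \optd\pluralityccdc$ (in fact with equality). The guiding principle is to make deleting candidates in the constructed election correspond exactly to choosing sets and deleting their union in the M$k$U instance, so that the number of deleted candidates tracks the size of the union of the chosen sets while the number of chosen sets is forced to be at least $k$. Under this correspondence the optimum deletion cost equals the optimum of M$k$U, and $g$ merely reads off the covered sets.

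Concretely, given $(\calu, \cals, k)$ I would build a plurality election with one \emph{element candidate} $e_j$ per $u_j \in \calu$, the distinguished candidate $p$, and a block of $\Theta(n)$ \emph{guard candidates}. For each set $S_i$ I add a single \emph{coverage voter} whose ballot ranks the elements of $S_i$ on top, then $p$, then the guards; such a voter awards its point to $p$ exactly when every element of $S_i$ has been deleted, and otherwise to the top surviving element of $S_i$. I give $p$ a large base block and tune every guard to a common fixed score of roughly (base) $+\,k-1$, chosen so that (i)~no element candidate can ever overtake $p$ (its score is bounded by the number of sets containing it), and (ii)~$p$ becomes the \emph{unique} winner if and only if at least $k$ coverage voters are activated, i.e.\ at least $k$ sets are fully deleted. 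The guards are placed in a routing cycle so that neutralizing them by deletion is strictly more expensive than the intended covering solution, which blocks the trivial ``delete every opponent'' strategy and forces the optimizer onto the covering mechanism. Because the only useful deletions are of element candidates and fully covering $k$ sets requires deleting exactly the union of those $k$ sets, the minimum deletion count equals $\OPT_{\MkU}(x)$, and any feasible deletion set of cost $c$ yields, via $g$, a $k$-family whose union has size at most $c$.

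For correctness I would establish $\OPT_{\calt}(f(x))$ and $\OPT_{\MkU}(x)$ agree (up to the measure offset discussed below) by the two matching directions: turn an optimal $k$-family into a deletion set of the same union size that makes $p$ win, and turn any feasible deletion set into a $k$-family of no larger union. The key structural fact that makes the ``must fully cover at least $k$ sets'' characterization clean is the monotonicity of plurality scores under deletion---deleting candidates never lowers any remaining candidate's score---which is exactly what keeps the guards at a fixed threshold and keeps the element candidates permanently below $p$. Given the measure agreement, the strict-reduction inequality follows immediately since corresponding solutions have equal measures and hence equal ratios.

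The step I expect to be the main obstacle is reconciling the additive $+1$ in the measure $1+\|C'\|$ of $\calt$ against the plain union measure of M$k$U. Since M$k$U always has optimum at least $1$ whereas the control measure carries a $+1$, the naive correspondence gives $\mu_{\calt} = 1 + (\text{union})$; but adding the same positive constant to the numerator and denominator of a ratio exceeding $1$ pulls it toward $1$, i.e.\ in the \emph{wrong} direction for a strict reduction out of M$k$U, so this cannot simply be absorbed. To defeat it the construction must be arranged so that exactly one unit of coverage is free---for instance by letting $p$ (or a single element common to every set) account for one element of every union---so that the deletion count equals $(\text{union})-1$ and hence $\mu_{\calt} = (\text{union}) = \mu_{\MkU}$ exactly. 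Making this offset cancel \emph{uniformly} over all solutions, simultaneously with keeping the guards un-deletably expensive and the element candidates below $p$, is the delicate part; once it is in place the ratios match identically and the reduction is strict. Finally I would verify that the candidate count is polynomial (indeed linear in $n$) so that the conditional $\Omega(m^{1/4})$ lower bound for $\MkU$ transfers to $\optd\pluralityccdc$.
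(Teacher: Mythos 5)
Your construction is essentially the paper's: one candidate per element of $\calu$, a distinguished candidate $p$ given a base block of first-place votes, one voter per set $S_i$ with ballot $S_i > p > \cdots$ (so that its point migrates to $p$ exactly when all of $S_i$ has been deleted), and a calibrated family of guard candidates sitting at base $+\,k-1$ so that $p$ wins uniquely iff at least $k$ sets are fully deleted, with the guards made too numerous ($n+1$, in a chain $b_1 > \cdots > b_{n+1}$ in the paper rather than your ``cycle'') for deleting them to ever beat the fallback of deleting all of $\calu$. The map $g$ that reads off the fully-deleted sets, the lemma that at least $k$ such sets exist, and the two matching directions establishing that the optima coincide are likewise the same. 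The one point of divergence is your handling of the additive $+1$ in the control measure. Your worry there is legitimate: the paper equates $\OPT_{\calt}(f(x))$ with $\OPT_{\MkU}(x)$ and compares raw deletion counts to union sizes, and if the measure $1+\|C'\|$ is kept literally then $u \le \|D\|$ and $u^* = \|D^*\|$ do not by themselves give $u/u^* \le (1+\|D\|)/(1+\|D^*\|)$ (e.g.\ $u=\|D\|=2$, $u^*=\|D^*\|=1$ yields $2 > 3/2$). However, your proposed repair---arranging for one unit of every union to be ``free'' so that the deletion count is the union size minus one---is the weak point of the proposal: distinct $k$-families need not have a common element in their unions, so there is no single candidate whose deletion can be made uniformly free, and you correctly flag this as unresolved. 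A cleaner resolution is to observe that since all measures involved are at least $1$, the offset costs only a constant factor in the ratio (one checks $(1+a)/(1+b) \ge a/(2b)$ for $a \ge b \ge 1$), so the reduction remains approximation-preserving in the sense needed to transfer the $\Omega(m^{1/4})$ lower bound, even if literal strictness requires renormalizing the measures.
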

\begin{proof}
    For brevity, let $\calt = \optd\pluralityccdc$\@.

    Let $f$ be a function that does the following
    given an $\MkU$ instance\, i.e., $\calu = \{u_1, \ldots, u_n\}$, a family $\cals = \{S_1, \ldots, S_m\}$ of nonempty subsets of $\calu$, and a positive integer $k \leq m$. 
    We assume without loss of generality that $k > 1$ as otherwise we can solve the problem in polynomial time.
    Let $B = \{b_1, \ldots, b_{n+1}\}$ be a set of buffer candidates, let $N_i$ be the number of sets in $\cals$ containing $u_i$, and let $N = \max\{N_1, \ldots, N_n\}$. Our reduction then outputs $(C, V, p)$, where
    \begin{itemize}
        \item  $C = \calu \cup B \cup \{p\}$ is the set of candidates;
        \item $V$ is the collection of votes comprising the following 
        votes:
        \begin{enumerate}
            \item for each $i \in [m]$, one vote of the form $ S_i > p > \cdots$,
            \item $k-1 + N$ votes of the form $b_1 > \cdots > b_{n+1}  > \cdots > p$, and
            \item $N$ votes of the form $p> \cdots$; and %
        \end{enumerate}
        \item $p$ is the distinguished candidate.
    \end{itemize}

    For simplicity, we will say ``Type~$i$ votes'' to refer those votes that are defined within item $i$ above.

    Let $g$ be a function that given an instance $I = (\calu, \cals, k)$ of $\MkU$ and a solution $D$ of $f(I) = (C, V, p)$ outputs a solution for $I$ w.r.t.\ $\MkU$ as follows. Let $\widehat{\cals} = \{S_i \mid S_i \subseteq D\}$. Then by Lemma~\ref{lemma:enough-candidates-deleted}, $\widehat{\cals}$ contains a $k$-sized subset and any such subset suffices as an output of $g$.

    \begin{lemma}\label{lemma:enough-candidates-deleted}
        $\|\widehat{\cals}\| \geq k$.
    \end{lemma}
    \begin{proof}
        Suppose not, for the sake of contradiction. Then $p$ is not ranked first by any of the Type~2 votes, but some candidate $d \neq p$ is. Thus for $p$ to defeat $d$, $p$ must be ranked first by at least $k$ Type~1 votes. Those votes correspond exactly to $\widehat{\cals}$, and so we have a contradiction. 
        \renewcommand{\qedsymbol}{(Lemma~\ref{lemma:enough-candidates-deleted})~\ensuremath{\openbox}}
    \end{proof}

    Note that $f$ and $g$ are polynomial-time computable. We'll now prove the correctness of our reduction.
    Let $I = (\calu, \cals, k)$ be an arbitrary instance of $\MkU$ and
    let $f(I) = (C, V, p)$.
    We will  prove that {$\OPT_{\calt}(f(I)) = \OPT_{\MkU}(I)$}\@.
    Let $\cals'$ denote an optimal solution to $\MkU$ on input $I$ and let $D^*$ denote an optimal solution to $\calt$ on input $f(I)$. Furthermore, let $D' = \bigcup_{S \in \cals'} S$. 
     
     We will first show that $\OPT_{\MkU}(I) \leq \OPT_{\calt}(f(I))$. 
     Assume for the sake of contradiction that $\|D'\| > \|D^*\|$. Because $p$ is a unique winner of $(C-D^*, V)$ and $D$ is an optimal solution, it must be that $D^* \cap B = \emptyset$. Indeed,
     deleting all the candidates in $\calu$ always leads to a solution and since $\|\calu\| \leq n$, it follows that $\|D^*\| \leq n$.
     Moreover, the only way deleting candidates from $B$ to help $p$ win is if all candidates in $B$ are deleted, which would contradict the fact that $\|D^*\| \leq n$. 
     So the only way for $p$ to defeat $b_1$ is if at least $k + N$ votes  rank $p$ first, and for $p$ to be guaranteed to defeat each $u \in \calu$ at least $N+1$ votes must rank $p$ first. 
     Since $k \geq 1$, it suffices to consider the fact that $p$ defeats $b_1$. 
     In this case, at least $k$ Type~1 votes had all the candidates ranked above $p$ deleted in election $(C-D^*, V)$, i.e., $\widehat{\cals} = \{S_i \mid S_i \subseteq D^*\}$ and $\|\widehat{\cals}\| \geq k$. Taking the union of any $k$-sized subset of $\widehat{\cals}$ yields a set of size at most $\|D^*\| < \|D'\|$ and thus better solution to $\MkU$ on input $I$, which is a contradiction.

    We now show that $\OPT_{\calt}(f(I)) \leq \OPT_{\MkU}(I)$.
    Suppose for the sake of contradiction that  $\|D'\| < \|D^*\|$ and
     consider the plurality election $E = (C-D', V)$. In that election, $\myscore_E(p) = k+N$; $\myscore_E(b_1) = k-1+N$;  for each $u \in \calu$, $\myscore_E(u) \leq N$; and for each remaining candidate $d$, $\myscore_E(d) = 0$.
    Thus $p$ is a unique winner of $E$, and so $D'$ is a solution for $f(I)$ w.r.t.\ $\calt$. But by assumption $\|D'\| < \|D^*\|$, which leads to a contradiction.

    For the rest of this proof, let $D$ be an arbitrary solution to $\calt$ on input $f(I)$ (recall that a simple corollary to Lemma~\ref{lemma:ccdc-approx}'s proof is that such a set is guaranteed to exist).

    We now prove that $R_{\MkU}(I, g(I, D)) \leq R_\calt(f(I), D)$. Let $D' = \bigcup_{S \in g(I, D)}S$.
    By definition and the fact that $\OPT_{\calt}(f(I)) = \OPT_{\MkU}(I)$, it suffices to show that $\|D'\| \leq \|D\|$.
    Notice that in the computation of $g$, every element of $\widehat{\cals}$ is a subset of $D$, so every element of $D'$ must also be an element of $D$. Therefore $\|D'\| \leq \|D\|$.
\end{proof}

Because our reduction is strict, it is easy to see than an $f(m)$-approximation algorithm for $\optd\pluralityccdc$ implies an $f(m)$-approximation algorithm for $\MkU$, so we are able to state the following result.

\begin{theorem}
Assuming the Dense versus Random Conjecture, the approximation ratio of any approximation algorithm for constructive control by deleting candidates under plurality is in $\Omega(m^{1/4})$, and this holds for both weighted and unweighted votes.
\end{theorem}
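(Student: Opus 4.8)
The plan is to transfer the conditional hardness of $\MkU$ to $\optd\pluralityccdc$ through the strict reduction of Theorem~\ref{t:mku-ccdc}, arguing by contraposition, and then to extend the bound to the weighted setting by a simple special-case observation. Since the reduction $(f,g)$ is already in hand and $\MkU$ is known to be $\Omega(m^{1/4})$-hard to approximate under the Dense versus Random Conjecture~\cite{chl-din-mak:c:min-k-union}, the entire argument reduces to composing the reduction with a hypothetical good approximation algorithm and checking that the parameters line up.

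For the unweighted case I would suppose, toward a contradiction, that some polynomial-time algorithm $M$ approximates $\optd\pluralityccdc$ within a factor $\rho(m) = o(m^{1/4})$, where $m = \|C\|$ denotes the number of candidates (the same parameter used for the $O(m)$ upper bound in Lemma~\ref{lemma:ccdc-approx}). Using the maps $f$ and $g$ from Theorem~\ref{t:mku-ccdc}, I would assemble an approximation algorithm for $\MkU$: on an instance $x$, compute $f(x)$, run $M$ on it to obtain a solution $y \in \sol_{\optd\pluralityccdc}(f(x))$, and return $g(x,y)$. Because $(f,g)$ is \emph{strict}, its defining inequality gives $R_{\MkU}(x, g(x,y)) \le R_{\optd\pluralityccdc}(f(x), y) \le \rho(\|C\|)$, so this composed algorithm (polynomial time, as $f$, $M$, $g$ all are) approximates $\MkU$ to within $\rho(\|C\|)$.

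The step requiring the most care, and the main obstacle, is reconciling the two size parameters. The lower bound for $\MkU$ is phrased in the number of sets $\|\cals\|$, whereas in the instance $f(x)$ the candidate set is $C = \calu \cup B \cup \{p\}$ with $\|\calu\| = n$ and $\|B\| = n+1$, so $\|C\| = 2n+2 = \Theta(n)$. I would therefore track how $\|\cals\|$ relates to $\|C\|$ on the HDR-hard instances: since in those instances the number of sets and the number of ground elements are polynomially (indeed, up to constant factors) related, the guarantee $\rho(\|C\|) = o(\|C\|^{1/4}) = o(n^{1/4})$ would yield an $o(\|\cals\|^{1/4})$-approximation for $\MkU$, contradicting its conditional $\Omega(m^{1/4})$ lower bound. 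This contradiction establishes the claim for unweighted voters; verifying that the relevant parameters of the hard $\MkU$ instances are of the same order (so that the exponent $1/4$ survives the translation) is precisely the quantitative point I expect to need to nail down.

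Finally, the weighted case follows with essentially no extra work. Every instance produced by $f$ is unweighted, hence is a legitimate weighted instance with $w \equiv 1$ and the same candidate count; equivalently, unweighted $\optd\pluralityccdc$ is exactly the restriction of the weighted problem to unit weights. Thus any $o(m^{1/4})$-approximation algorithm for the weighted problem, when run on these unit-weight instances, approximates the unweighted problem equally well, and hence---by the argument above---approximates $\MkU$ within $o(m^{1/4})$, the same contradiction. Therefore the $\Omega(m^{1/4})$ lower bound holds in both the weighted and unweighted settings.
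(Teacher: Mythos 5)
Your proposal is correct and follows essentially the same route as the paper: the paper's proof is simply the observation that, since the reduction of Theorem~\ref{t:mku-ccdc} is strict, an $f(m)$-approximation for $\optd\pluralityccdc$ composes with $(f,g)$ to give an $f(m)$-approximation for $\MkU$, contradicting the conditional $\Omega(m^{1/4})$ bound, with the weighted case handled by restriction to unit weights exactly as you do. Your explicit attention to reconciling the two size parameters (number of candidates $\Theta(n)$ versus number of sets $\|\cals\|$ in the hard $\MkU$ instances) is a point the paper glosses over entirely, and is worth nailing down as you suggest.
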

\begin{proof}
    This follows directly from the conditional lower bound given on $\MkU$ by Chlamt{\'a}{\v{c}}, Dinitz, and Makarychev~\cite{chl-din-mak:c:min-k-union} and from our Theorem~\ref{t:mku-ccdc}. This lower bound also holds when the votes are weighted.
\end{proof}

While we would have preferred to have give an unconditional lower bound,
providing a lower bound based on HDR is reasonable and has been used in other work on approximations (e.g., see~\cite{che-che-ye-zha:c:min-k-union-extensions}).
\section{Inapproximability}\label{s:inapprox}

In this section, we present our inapproximability results. %
  All of the results in this section follow the same strategy, namely we show for each inapproximable problem that even finding \emph{one} solution for any input is $\NP$-hard.

To do so for nonpartition-based problems, we consider their ``unlimited'' variants. In particular, each of the nonpartition-based problems can be viewed as \emph{limited} forms of addition/deletion since there is a nonnegative integer $k$ that limits the number of candidates/votes that that can be added/deleted. In the \emph{unlimited} variants of these problems, there is no such restriction---any number of candidates/votes that can be added/deleted, e.g., all unregistered votes that or all candidates may be affected. Such variants are denoted by adding a ``U'' before the control action, e.g., plurality-CCUAC is the unlimited version of plurality-CCAC\@.\footnote{This problem, i.e., \pluralityccuac, is precisely the one used by~\cite{bar-tov-tri:j:control} and by many early papers in the field when modeling constructive control by adding candidates. However, their definition was asymmetric with the other addition/deletion control problems they defined, and the modern, limited version of constructive control by adding candidates, which is the version considered in this paper, was introduced by Faliszewski et al.~\cite{fal-hem-hem-rot:j:llull}.}

\subsection{General Cases}
Our first theorem in this section, albeit simple in its statement and its proof, is a useful one as it establishes in one fell swoop the inapproximability of a large number of control problems. Since we are concerned with closing all the gaps opened by Brelsford~\cite{bre:msthesis:approximations}, we have opted to still keep this result, despite its simplicity.

\begin{theorem}\label{t:inapproximable-partitioning}
    For 
    each partition-based decision control problem $\calt$ (both with weighted and unweighted votes),
    if $\calt$ is \NP-complete, then the optimization version of $\calt$, regardless of the measure used, is not approximable, unless $\P = \NP$.
\end{theorem}
\begin{proof}
    Notice that for any partition-based control problem $\calt$, an approximate solution is still a valid solution under any valid measure as there are no restrictions on the solution. Thus an approximate algorithm for $\calt$ is also an exact polynomial-time algorithm for $\calt$\@. If $\calt$ is $\NP$-complete, this implies that $\P=\NP$. Therefore, unless $\P=\NP$, no such approximation algorithm exists.\footnote{As an aside, the result that Theorem~\ref{t:inapproximable-partitioning} provides also follows from Theorem~5 of \cite{car-cha-hem-nar-tal-wel:c-eumas:search-vs-search}; using the language of that paper, an approximation algorithm for an $\NP$-complete partition-based control problem would indeed be a polynomial-time computable ``refinement'' of the corresponding search problem, and thus by their Proposition~3 it would follow that $\SAT \in \P$\@. Thus unless $\P=\NP$, any $\NP$-complete partition-based control problem is inapproximable. However, we favor the proof we give as it highlights a simplicity in the argument, and that same simple argument line proves to be useful in the rest of the section.}
\end{proof}

A natural follow-up to the above result is whether the partition-based problems we define are still inapproximable if one were to consider their ``limited'' version where the problem's input includes an integer bound on the absolute difference between the sizes of the parts of the partition. However, it's easy to see that the optimization problems for the limited cases coincide with the optimization versions of the corresponding standard partitioning problems, and thus the problems remain inapproximable.

\subsection{Plurality}
To prove that destructive control by deleting candidates under plurality is inapproximable, we modify the proof that \pluralitydcdc\ is \NP-complete by \cite{hem-hem-rot:j:destructive-control} to hold in the ``unlimited'' setting, i.e., we prove that destructive control by unlimited deleting of candidates 
under plurality is \NP-complete.

\begin{proposition} 
\label{p:pluralitydcudc-npcomplete}
$\pluralitydcudc$ (destructive control by unlimited deleting of candidates under plurality) is \NP-complete.
\end{proposition}

\begin{proof}

To prove our result, we give a reduction from Hitting Set, which is a standard NP-complete problem~\cite{kar:b:reducibilities}, and a construction given by Hemaspaandra, Hemaspaandra, and Rothe~\cite{hem-hem-rot:j:destructive-control}.

\LANGDEF{Hitting Set}
{A finite set $B = \{b_1, b_2, \dots, b_m\}$, a family $\mathcal{S} = \{S_1, S_2, \dots, S_n\}$ of subsets $S_i \subseteq B$, and a positive integer $k$.}
{Does there exist a subset $B' \subseteq B$ with $\|B'\| \leq k$ such that for every $i \in [n]$, $S_i \cap B' \neq \emptyset$?}

\begin{construction}
[\cite{hem-hem-rot:j:destructive-control}] 
Given a triple $(B,S,k)$, where $B = \{b_1, b_2, \dots, b_m\}$ is a set, $S = \{S_1, S_2, \dots, S_n\}$ is a family of subsets $S_i$ of $B$, and $k \le m$ is a positive integer, we construct the following election:
\begin{itemize}
    \item The candidate set is $C = B \cup \{c, w\}$.
    \item The voter set $V$ is defined as follows:
    \begin{enumerate}
        \item There are $2(m - k) + 2n(k + 1) + 4$ votes of the form $c > w > \dots$.
        \item There are $2n(k +1) + 5$ votes of the form $w > c > \dots$.
        \item For each $i$, $1 \le i \le n$, there are $2(k+1)$ votes of the form $S_i > c > \dots$, where ``$S_i$'' denotes the elements of $S_i$ in some arbitrary order.
        \item Finally, for each $j$, $1 \le j \le m$, there are two votes of the form $b_j > w > \dots$.
    \end{enumerate}
    \item The distinguished candidate is c.
\label{hhr07construction}
\end{itemize}
\end{construction}

    Given a subset of candidates to delete $C' \subseteq C - \{c\}$, we can compute the plurality score in polynomial time, and then check whether $c$ is a winner in polynomial time. Thus, the problem is in NP\@. Using Construction \ref{hhr07construction}, we will now show the reduction from Hitting Set to \pluralitydcudc.\ In this proof, Group~($i$) refers to collection of votes defined by item~$i$ in the construction above.

    First, assume there exists a hitting set $B' \subseteq B$ is a hitting set with $\|B'\| \le k$. Let $D = B - B'$ is the set of candidates to be deleted. Then, consider the election $(C - D, V)$:
    \begin{itemize}
        \item From Group $(1)$, there will be $2(m-k) + 2n(k+1) + 4$ votes for $c$.
        \item From Group $(2)$, there will be $2n(k+1) + 5$ votes for $w$.
        \item For each $1 \le i \le n$, there exists $b_j \in S_i$ since $D = B - B'$. Therefore, from Group $(3)$ there is no votes for $c$ (these votes will goes to some $b_j$ instead).
        \item Since we deleted $(m-k)$ candidates $b_j$, $w$ received an additional $2(m-k)$ votes from Group $(4)$. The remaining votes goes to some $b_j$.
    \end{itemize}
    Thus, $\score{w} = 2(m-k) + 2n(k+1) + 5  > \score{c} =  2(m-k) + 2n(k+1) + 4$ votes. Therefore, $w$ is a unique winner of election $(C-D, V)$, and so there is a successful control action.
    
    Second, assume there exists a deletion set $D \subseteq C - \{c\}$ such that $w$ is the unique plurality winner of $(C - D, V)$. Let $B' = B - D$ be candidates that are not deleted.
    
    If there exists a $S_i$ such that $S_i \cap B = \emptyset$, we know that $c$ got additional $2(k+1)$ votes from Group (3). Let $\alpha$ be the number of those ``nonhitting'' sets, then $c$ will have $ 2(m-k) + 2n(k+1) + 4 + 2\alpha(k+1)$ votes. However, even when $w$ received all votes from Group~(2) and~(4), can only have at most $2n(k+1) + 5 + 2m$ votes.  Therefore, $w$ can be the unique winner only when $\alpha = 0$, meaning that every $S_i$ intersects $B'$, i.e., $B'$ is a hitting set of $S$.
    
    If $\|B'\| > k$, then $\|D \cap B\| < (m-k)$, so 
        $$
        \score{w} < 2n(k+1) + 5 + 2(m-k)
        $$
        so $c$ is still a winner. Therefore, we know that $\|B'\| \le k$.
    Thus, $B'$ is a hitting set of $S$ of size less than or equal to $k$.
\end{proof}

Using a similar argument as the one used in the proof of Theorem~\ref{t:inapproximable-partitioning}, we prove that destructive control by deleting candidates under plurality is not approximable, unless $\P=\NP$.

\begin{theorem}
    Both $\optpluralitydcdc$ and $\optpluralitydcdcweighted$ are inapproximable, unless $\P = \NP$.
\end{theorem}
\begin{proof}
    By Proposition~\ref{p:pluralitydcudc-npcomplete}, $\pluralitydcudc$ is $\NP$-complete. Any approximation algorithm for $\optd{\rm plurality}\dash\allowbreak{\rm DCDC}$ implies that $\pluralitydcudc \in \P$, thus unless $\P=\NP$, no such approximation exists.
    For the weighted version, we can construct a reduction from the unweighted version by assigning unit weights to all votes (i.e., each weight is set to 1). As a result, the corresponding weighted decision problem
    is \NP-complete and thus its corresponding approximation version is also inapproximable.
\end{proof}

\subsection{Condorcet}

In this section we prove that the optimization problems of constructive control by adding/deleting voters, both in the weighted and unweighted setting, under Condorcet voting are inapproximable. 

    Bartholdi, Tovey, and Trick~\cite{bar-tov-tri:j:control} showed that \condorcetccav\ is \NP-complete and we modify their reduction to hold in the unbounded setting, thereby proving that \condorcetccuav\ remains \NP-complete when an unlimited number of voters may be added. 

\begin{proposition}\label{p:condorcet-ccuav}
        \condorcetccuav\ (constructive control by unlimited adding of voters under Condorcet) is \NP\dash complete.
\end{proposition}
\begin{proof}
    \condorcetccuav\ is in \NP\ because given a set of added votes, we can determine in polynomial time if a candidate is the Condorcet winner. To establish \NP\dash hardness, we give a reduction from X3C to \condorcetccuav. 
    
    \LANGDEF{Exact Cover By Three-Sets (X3C)}
{A finite set $B = \{b_1, b_2, \dots, b_{3k}\}$, where $k$ is a positive integer, and  a family $\mathcal{S} = \{S_1, S_2, \dots, S_n\}$ of subsets $S_i \subseteq B$ with $\|S_i = 3\|$ for each $i$.}
{Does there exist a subset $\cals' \subseteq \cals$ with $\|\cals'\| =k$ and $\bigcup_{ S_i \in \cals'} = B$?}  
    
    Given an instance $(B, \cals)$ of X3C, where $B = \{b_1, b_2, \dots, b_{3k}\}$ and $\mathcal{S} = \{S_1, S_2, \dots, S_n\}$ with $S_i \subseteq B$ and $\|S_i\| = 3$ for every $i$. 
    For each $S_i$, we will denote its members by $b_i^1$, $b_i^2$, and $b_i^3$.
    Without loss of generality, we assume that $k \ge 3$ (if $k < 3$ then we can solve the problem in polynomial time). We construct the following instance of \condorcetccuav:
    \begin{itemize}
        \item The candidate set $C = B \cup \{c, p\}$, where $p$ is the distinguished candidate.
        \item The collection of registered votes $V$ consists of:
        \begin{itemize}
            \item $k-1$ votes of the form $B>p>c$.
            \item 2 votesof the form $p>c>B$.
        \end{itemize}
        \item The collection of unregistered votes $W$ consists of: For each $S_i$, one vote of the form $b_i^1>b_i^2>b_i^3>c>p>B-S_i$. %
    \end{itemize}

    For candidates $a$ and $b$, let $A=\{v\in V \mid v \text{ prefers } a \text{ to } b\}$ be the set of votes that prefer $a$ to $b$, and $B=\{v\in V \mid v \text{ prefers } b \text{ to } a\}$ be the set of votes that prefer $b$ to $a$. We define the pairwise score as $\score{a,b}= \|A\| - \|B\|$.
    We claim that $\cals$ contains an exact cover of $B$ by 3-sets if and only if $p$ can be made the Condorcet winner of $(C, V)$ by adding any number of voters from $W$.
    
    For the left to right direction, add the $k$ votes corresponding to the exact cover. All such votes rank
    candidate $b_i$ above $p$ exactly once, while the remaining $k-1$ added votes rank $p$ above that same $b_i$. Thus 
    \[\score{p, b_i} = -(k-3) + (k-1) - 1 = 1.\]

    Hence, p beats every candidate $b_i$ by one vote. Because all $k$ added votes prefer $c$ above $p$,  we have $\score{p, c} = (k+1) - k = 1$ and $p$ also beats $c$ by one vote. Therefore, $p$ becomes the Condorcet winner after adding $k$ votes.

    For the right to left direction, assume there is a set $W' \subseteq W$ whose addition makes $p$ the Condorcet winner. Since every added vote ranks $c$ above $p$, adding more than $k$ votes will result in $c$ beating $p$ in a pairwise contest, so $\card{W'} \leq k$. %
    No candidate $b_i$ can be preferred over $p$ by more than one added vote. Otherwise, $b_i$ would gain at least two additional points against $p$ (so $\score{b_i,p}\ge (k-3) + 2 - (k-1) \ge 1$), while $p$ would gain at most $k-2$ points, and $p$ would lose the pairwise contest with~$b_i$. Thus each $b_i$ is ranked above $p$ by at most one vote in~$W'$. Assume, for contradiction, that some $b_i$ is ranked above $p$ by no added vote. There are $3k$ candidates $B = \{b_1,\dots,b_{3k}\}$ and $k$ votes in~$W'$, each contributing exactly three positions above $p$. By the pigeon-hole principle, some other candidate $b_{i'}$ must be
    ranked above $p$ by at least two votes, contradicting the bound of one vote per $b_i$. Therefore every $b_i$ is ranked above $p$ by exactly one of the $k$ added votes.
    The $k$ added votes in $W'$ correspond to $k$ disjoint sets $S_i$ whose union are $B$. Hence the set of added votes is an exact cover of $B$ by 3-sets.
\end{proof}

\begin{theorem}\label{t:condorcet-ccav-inapp}
    \optcondorcetccav\ and \optd\allowbreak\condorcet\dash\allowbreak\CC\AV\dash\weighted\ are inapproximable, unless $\P = \NP$.
\end{theorem}
\begin{proof}
    By Proposition~\ref{p:condorcet-ccuav}, $\condorcetccuav$ is $\NP$-complete. 
    An approximation algorithm for 
    $\optd\condorcet\dash\allowbreak{\rm CCAV}$
    implies that $\condorcetccuav\in \P$, thus unless $\P=\NP$, no such approximation exists.
    For $\optd\allowbreak\condorcet\dash\allowbreak\CC\AV\dash\weighted$, we can construct a reduction from $\optd\allowbreak\condorcet\dash\allowbreak\CC\AV$ by assigning unit weights to all votes (i.e., each weight is set to one). As a result, \optcondorcetccavweighted\ is also $\NP$-hard and thus inapproximable.
\end{proof}

Next, we show that \condorcetccudv\ is \NP-complete. In the case of \condorcetccuav, we were able to modify the known reduction from X3C to \condorcetccav\ to establish \NP-hardness. However, for \condorcetccudv, we were not able to modify the known reduction from X3C to \condorcetccdv\ (see~\cite{fal-hem-hem-rot:j:llull}), as it does not allow us to enforce a constraint on the number of deletable votes. We instead give a new reduction from Hitting Set, which enables us to satisfy all required constraints for our proof.

This next result is somewhat surprising as one typically expects the ``unlimited'' deletion of voter to be polynomial-time computable, with the ``obvious'' strategy being to delete all voters. Such a strategy certainly works under many voting rules like approval and plurality, but under Condorcet doing so does not make the distinguished candidate a Condorcet winner, and so here we show that simply finding a successful deletion of voters is hard.

\begin{proposition}\label{p:condorcet-ccudv}
    \condorcetccudv\ (constructive control by unlimited deleting of voters under Condorcet) is \NP\dash complete.
\end{proposition}
\begin{proof}
    \condorcetccudv\ is in \NP\ because given a set of deleted votes, we can determine in polynomial time if a candidate is a winner. To establish \NP\dash hardness, we give a reduction from Hitting Set to \condorcetccudv.
    Given an instance of $(B, \cals, k)$ Hitting Set, where $B = \{b_1, b_2,\dots,b_m\}$, a family $\cals = \{S_1, S_2,\dots,S_n\}$ of subsets $S_i \subseteq B$, and a positive integer $k$. Without loss of generality, we assume $k \leq m$ (if $k > m$ then we can verify if each element $b_i$ lies in some set $S_j$ and either choose those $m$ sets or answer No). We construct the following instance of \condorcetccudv :
    \begin{itemize}
        \item The candidate set $C$ consists of the $n+k+3$ candidates:
        \begin{itemize}
            \item The distinguished candidate $p$.
            \item There are $n$ candidates $\cals = \{S_1, S_2,\dots,S_n\}.$
            \item There are $k+1$ candidates $D = \{d_1, d_2,\dots,d_{k+1}\}$.
            \item There is one candidate $e$.
        \end{itemize}
        \item  The voter set $V$ consists of $m+k+1$ votes:
        \begin{itemize}
            \item There is one voter $x_1$ that has a preference 
                $\cals > D - d_1 > p > d_1 > e$.
            \item For each $2 \leq i \leq k + 1$, there is one voter $x_i$ that has a preference:
                $D-d_i>p>d_i>e>\cals$.
            \item For each element $b_j(1 \leq j \leq m)$, let $\cals'_j = \{S_\ell \mid b_j \in S_\ell\}$, there is one voter $y_j$ that has a preference:
            $\cals - \cals'_j>e>p>D>\cals'_j$.
        \end{itemize}
    \end{itemize}
    
    We claim that there exists a subset $B' \subseteq B$ that is a hitting set of $B$ with $\|B'\| \leq k$ if and only if there exists a set of votes $V'$ such that $p$ is the Condorcet winner of the election $(C, V - V')$. Let $X = \{x_i \mid 1\leq i\leq k+1\}$ and $Y = \{y_j \mid 1\leq j\leq m\}$. For candidates $a,b$, 
    let $N_V(a, b)$ be the number of votes in $V$ where $a$ is ranked above $b$.
    We define the pairwise score of the head-to-head contest between $a$ and $b$ as $\score{a,b}= N_V(a, b) - N_V(b, a)$.

    For the left to right direction, assume $B' \subseteq B$ is a hitting set of $B$ and $\|B'\| \leq k$. Keep all votes in $X$ and every voter $y_j$ such that $b_j \in B'$ and delete the remaining $m-k$ votes in $Y$. Since $B'$ is the hitting set, there will be at least one voter in $Y$ that preferred $p$ over $S_i$ so 
    \[\score{p, S_i} \geq (-1) + k + 1 - (k-1) \geq 1.\]
    
    Among all votes in $X$, each candidate $d_i$ is lower higher than $p$ only once, so 
    \[\score{p, d_i} = 1 + k - k = 1 \text{ and}\]
    \[\score{p, e} = k + 1 - k = 1.\]

    Thus $p$ beats every candidate and is the Condorcet winner after deleting $m - k$ votes.

    For the right to left direction, let $V'$ be the set of deleted votes. For an arbitrary candidate $d_i$, if the corresponding voter $x_i$ is deleted, then in any remaining votes would rank either $d_i$ or $e$ above $p$, with the other below $p$. Consequently, $\score{p, d_i} = \score{e, p}$ and $p$ could not beat both $d_i$ and $e$. Repeating this argument for every $d_i$, we conclude that no voter in $X$ can be deleted. Counting only the $k+1$ votes in $X$, we have $\score{p,e} = k + 1$. Every voter in $Y$ prefers $e$ to $p$ so at most $k$ votes in $Y$ can be kept.

    Similarly, if we count only the $k+1$ votes in $X$, we have $\score{d_i, p} = k - 1$ so at least $k$ votes in $Y$ must be kept since every voter in $Y$ prefers $p$ over $d_i$. Therefore, we keep exactly $k$ votes in $Y$ and \[\|V'\| = m + k + 1 - (k+1) - k = m - k.\]

    Considering only the votes in $X$, we have $\score{p, S_i} = k - 1$ for every $S_i$. If, for some $S_i$, all $k$ retained votes in $Y$ prefer $S_i$ over $p$, then $\score{p, S_i} = -1$ and $S_i$ would beat $p$ in a pairwise contest. Therefore, the set of votes $Y' \subseteq Y$ that is kept must contain, for each $S_i$, at least one voter $y_j$ with $b_j\in S_i$. The corresponding elements of~$B$ therefore form a hitting set of size $k$.
\end{proof}

\begin{theorem}
    $\optcondorcetccdv$ and $\optd\condorcet\dash\allowbreak \CC\DV\dash\allowbreak \weighted$ are inapproximable, unless $\P = \NP$.
\end{theorem}
\begin{proof}
    By Proposition~\ref{p:condorcet-ccudv} $\condorcetccudv$ is $\NP$-complete. 
    Any approximation algorithm for $\optd\condorcet\dash\allowbreak{\rm CCDV}$ implies that $\condorcetccudv\in \P$, thus unless $\P=\NP$, no such approximation exists.
    For $\optd\condorcet\dash\allowbreak \CC\DV\dash\allowbreak \weighted$, we can construct a reduction from $\optd\condorcet\dash\allowbreak \CC\DV$ by assigning unit weights to all votes (i.e., each weight is set to 1). As a result, $\optd\condorcet\dash\allowbreak \CC\DV\dash\allowbreak \weighted$ is also $\NP$-hard and thus inapproximable.
\end{proof}

\section{Related Work}\label{s:related-work}

The study of electoral control problems and their complexity was initiated by Bartholdi, Tovey, and Trick~\cite{bar-tov-tri:j:control}, who in their seminal work defined the ``constructive'' setting. They defined the notions of vulnerability, immunity, and resistance in that context, and provided certain axiomatic results.
Hemaspaandra, Hemaspaandra, and Rothe~\cite{hem-hem-rot:j:destructive-control} continued this line of research by defining the ``destructive'' setting, clarifying the winner model, and defining the standard tie-handling rules for the partition-based control problems. They also continued the vulnerable/immune/resistant analysis in their work. Since then, a lot of work has gone into defining new types of control and proving complexity-related results for those problems.
For a more comprehensive overview of the literature on electoral control, see the recent chapter by Baumeister and Rothe~\cite{bau-rot:b:preference-aggregation}.

As mentioned earlier, much work on approximation algorithms (for optimization problems) in the study of electoral manipulative attacks has focus on manipulation and bribery.
Brelsford et al.~\cite{bre-fal-hem-sch-sch:c:approximability-of-manipulation} studied the approximability of manipulating elections under the Borda rule, establishing APX-hardness for some manipulation problems. For bribery, various approximation results have been established across different voting systems. Faliszewski, Hemaspaandra, and Hemaspaandra~\cite{fal-hem-hem:j:bribery}, Keller, Hassidim, and Hazon~\cite{kel-has-haz:j:approximating,kel-has-haz:j:approximate-weighted-priced-bribery}, and Xia~\cite{xia:margin-of-victory} analyze the complexity and approximability of bribery in scoring rules and margin of victory computations. Yang~\cite{yan:c:election-few-candidates} and Faliszewski et al.~\cite{fal-hem-hem-rot:j:llull} show that bribery is fixed-parameter tractable in settings with few candidates. For a comprehensive summary of parameterized approaches to manipulation, control, and bribery, see the survey by Betzler et al.~\cite{bet-bre-che-nie:c:parameterized-elections-survey}.

Some early explorations into approximation algorithms for electoral control were by Brelsford~\cite{bre:msthesis:approximations}, who studied constructive and destructive control by adding candidates and bribery under plurality voting, proving inapproximability for both the weighted and unweighted cases. 
Brelsford showed that under plurality, both constructive and destructive control by adding candidates is inapproximable, both with weighted and unweighted votes.
Faliszewski, Hemaspaandra, and Hemaspaandra~\cite{fal-hem-hem:j:weighted-control} present an approximation algorithm for constructive control by adding and deleting votes under $t$-approval and $t$-veto, with weighted votes. They do so by a reduction to Weighted Set Multicover, which is also a special case of CIP, and also leverage the results from~\cite{kol-you:j:approx-cip}. Even though plurality is 1-approval, it is known that plurality is vulnerable to both control by adding and deleting votes (see~\cite{bar-tov-tri:j:control, hem-hem-rot:j:destructive-control}). Thus their work does not subtract from our results.
Bredereck et al.~\cite{bre-fal-nie-sko-tal:j:mixinteger-lp} on the other hand provide interesting results 
variations of multiset cover (which are all special cases of CIP), and prove analogous results to our Propositions~\ref{prop:opt-approval-ccav-w-approx} and~\ref{prop:opt-approval-ccdv-w-approx}, but in the nonunique-winner model; however, their results are not explicitly about approximations and we note that their polynomial-time many-one reductions are effectively strict reductions.
Some other work in the last few years has also focused on the search complexity of control problems~\cite{hem-hem-men:j:search-versus-decision,car-cha-hem-nar-tal-wel:c-eumas:search-vs-search}.

Beyond approximation with respect to optimization problems, a large body of work
investigates fixed-parameter tractability (FPT). Many FPT results exist for control and bribery under various voting rules in the weighted setting. The techniques used often reduce election problems to covering or packing problems and employ integer programming formulations with convex/concave constraints (see for example~\cite{bre-fal-nie-sko-tal:j:mixinteger-lp}). Multiprong control---simultaneous application of multiple control types---has also been shown to admit FPT algorithms in specific settings~\cite{fal-hem-hem:j:multiprong}.

Our work gives a strict reduction from constructive control by deleting candidates under plurality voting to the Minimum \(k\)-Union ($\MkU$) problem. The best known approximation hardness for $\MkU$ is conditioned on the ``Dense vs. Random Conjecture,'' which is formalized in the work of Chlamt{\'a}{\v{c}}, Dinitz, and Makarychev~\cite{chl-din-mak:c:min-k-union}. Under this conjecture, it is believed that $\MkU$ cannot be approximated within a factor better than $O(m^{1/4})$, and that lower bound still stands to date. On a related note, 
the Maximum $k$-Cover, which is the maximization version of $\MkU$, is known to not have a better-than-$O(\log n)$ approximation~\cite{fei:j:set-cover}. Minimum $k$-Union and Maximum $k$-Cover are closely related to each other, and while the latter is well studied, there is a small amount of literature on the former. However, a natural consequence of their connection is that unless $\P=\NP$, any approximation algorithm for $\MkU$ has an $\Omega(\log n)$ lower bound.

Our work continues this line of inquiry by analyzing the approximability of several control actions under plurality, approval, and Condorcet voting, for both weighted and unweighted variants. We complete the resistance landscape outlined by Brelsford~\cite{bre:msthesis:approximations}, addressing previously unresolved cases.

\section{Conclusion}

We analyzed the approximability of the standard electoral control problems under plurality, approval, and Condorcet, in both the weighted and unweighted vote settings, resolving open cases left by prior work of Brelsford~\cite{bre:msthesis:approximations}. 

Our main contributions include new approximation algorithms, hardness/completeness results, and a refined classification of resistance across voting rules.
We also provide an axiomatic result for constructive control by deleting candidates. To our knowledge, every actively-studied, ``natural'' voting rule satisfies that property.

For plurality, we give a strict reduction from the Minimum $k$-Union problem to constructive control by deleting candidates, thus giving a conditional lower bound on that control problem under the Dense vs. Random Conjecture~\cite{chl-din-mak:c:min-k-union}. For completeness results under approval, we reduce set covering problems into voter control problems, and in turn model those voter control problems as covering integer programs (CIPs), leveraging known approximation results~\cite{kol-you:j:approx-cip}.
We show that all the remaining $\NP$-hard standard control problems have no approximation algorithms by giving types of arguments, depending on the nature of the control action and the voting rule.  For example, we show that partition-based control problems are inapproximable in general

We also provide an axiomatic result and prove that under any voiced voting rule, constructive control by candidate deletion lies in $\POLYAPX$. Our results deepen the understanding of approximability in voting and open new directions for proving unconditional hardness and extending techniques to broader voting systems. 

We find this work opens up many future directions. The natural first step is to extend this study to other voting rules or other control actions. Moreover one could, for each of our partition problems whose decision problem is in $\P$, study the optimization version of that problem; in general the $\P$ membership of a partition-based control problem does not necessarily imply a polynomial-time computable optimization counterpart~\cite{hem-hem-men:j:search-versus-decision}.

Another direction is to refine the lower bound for constructive control by deleting candidates under plurality, possibly involving a deeper understanding of the Minimum $k$-Union problem.  In fact, it would be interesting to find control problems that are equivalent (under some appropriate notion of equivalence/reduction) to $\MkU$. 
We would also find it interesting to explore axiomatic characterizations for approximability, e.g., by finding necessary and sufficient conditions on a voting rule $\cale$ for constructive control by deleting candidates under $\cale$ to be approximable. Our Lemma~\ref{lemma:ccdc-approx} provides a sufficient condition, but it is easy to see that one can design a voting rule $\cale$ that is not voiced and yet $\optd\cale\dash\CC\DC$ is approximable.
Perhaps more ambitious would be to develop dichotomy theorems for approximability in electoral control, aiming to classify when control problems---such as adding or deleting votes or candidates---are approximable or inapproximable. In fact, prior work has established dichotomies that separate $\P$ and $\NP$-hard cases of manipulation and bribery~\cite{fal-hem-hem:j:bribery, hem-hem:j:dichotomy-scoring,hem-sch:c:psrs}, and some results exist for control~\cite{hem-hem-sch:c:psrs,fal-hem-hem:j:weighted-control}.

\paragraph{Acknowledgement} This work was supported in part by  
the Laurie Bukovac Hodgson \& David Hodgson Endowed Fund, 
the Reid and Polly Anderson Endowment, and 
the Denison University Research Foundation. 
A two-page (excluding references) version of this paper will appear in the proceedings of the \emph{25th International Conference on Autonomous Agents and Multiagent Systems} (AAMAS 2026)~\cite{bui-cha-le-ngu:c-aamas:approximation}; we thank the anonymous reviewers for their comments, which helped improve the presentation of this work. 

\bibliographystyle{alpha}
\newcommand{\etalchar}[1]{$^{#1}$}

\end{document}